\renewcommand{\leq}{\leqslant}
\renewcommand{\geq}{\geqslant}
\renewcommand{\le}{\leqslant}
\renewcommand{\ge}{\geqslant}
\newcommand{\Empty}{{\tt Empty}}
\newcommand{\Oh}{\mathcal{O}}
\renewcommand{\Pr}[1]{\ensuremath{\operatorname{\mathbf{Pr}}\left[#1\right]}}
\newcommand{\Ex}[1]{\ensuremath{\operatorname{\mathbf{E}}\left[#1\right]}}
\newcommand{\Var}[1]{\ensuremath{\operatorname{\mathbf{Var}}\left[#1\right]}}
\newtheorem{thm}{Theorem}  
\newtheorem{lem}[thm]{Lemma}
\newtheorem{defi}{Definition}
\newtheorem{cor}[thm]{Corollary}
\numberwithin{thm}{section}
\title{Balanced Allocation on Graphs:  A Random Walk Approach}
\author{Ali Pourmiri
{\small{Institute for Research in Fundamental Sciences (IPM), Tehran, Iran}}\\ 
{\small{\tt pourmiri@ipm.ir}}}
\begin{document}
\maketitle
\abstract{The standard balls-into-bins model is a process which randomly allocates  $m$  balls into $n$ bins where each ball  picks  $d$ bins independently and uniformly at random and  the ball is then  allocated in a least loaded bin in the set of $d$ choices. 
	When $m=n$ and $d=1$, it is well known that  at the end of process the maximum number of balls at any bin, the \emph {maximum load},  is  $(1+o(1))\frac{\log n}{\log\log n}$ with high probability\footnote{With high probability refers to an event that holds with probability $1-1/n^c$, where $c$ is a constant. For simplicity,  we sometimes abbreviate it  as whp. }. Azer et al. \cite{ABKU99} showed that  for the $d$-choice process, $d\ge 2$, provided ties are broken randomly, the maximum load is 
	$\frac{\log\log n}{\log d}+\Oh(1)$.
	
	In this paper we  propose   algorithms for  allocating $n$ sequential balls into   $n$ bins that are  interconnected  as a  $d$-regular $n$-vertex graph $G$, where $d\ge3$ can be any integer.
	Let $l$ be a given positive integer. In each round $t$, $1\le t\le n$, ball $t$ picks a node of $G$ uniformly at random and  performs a non-backtracking random walk of length $l$  from the chosen node.  
	Then it
	allocates itself on  one of the visited nodes with minimum load (ties are broken uniformly at random).  
	Suppose that $G$ has a sufficiently large girth and $d=\omega(\log n)$. Then we establish  an upper bound for the
	maximum number of balls at any bin after allocating $n$ balls by the algorithm, called  {\it maximum load},  in terms of $l$ with high probability. We also  show   that  the upper bound  is at most an $\Oh(\log\log n)$ factor above the lower bound that is proved for the algorithm.
	In particular, we show that  
	if
	we set $l=\lfloor(\log n)^{\frac{1+\epsilon}{2}}\rfloor$, for every constant $\epsilon\in (0, 1)$, and 
	$G$ has girth at least $\omega(l)$,  then the maximum load attained by the algorithm is bounded by  $\Oh(1/\epsilon)$ with high probability.
	Finally, we slightly modify the algorithm to have  similar results for balanced allocation on $d$-regular graph with $d\in[3, \Oh(\log n)]$ and sufficiently large girth.}

\section{Introduction}

The standard balls-into-bins model is a process which randomly allocates  $m$  balls into $n$ bins where each ball  picks  $d$ bins independently and uniformly at random and  the ball is then  allocated in a least loaded bin in the set of $d$ choices. 
When $m=n$ and $d=1$, it is well known that  at the end of process the maximum number of balls at any bin, the \emph {maximum load},  is  $(1+o(1))\frac{\log n}{\log\log n}$ with high probability. Azer et al. \cite{ABKU99} showed that  for the $d$-choice process, $d\ge 2$, provided ties are broken randomly, the maximum load is 
$\frac{\log\log n}{\log d}+\Oh(1)$.
For a complete survey on the standard balls-into-bins process  we refer  the  reader to \cite{sur01}.  
Many subsequent works consider the settings where the choice of bins are not necessarily independent and  uniform.  For instance,
V\"ocking \cite{V03} proposed an algorithm called \emph{always-go-left} that uses exponentially smaller number of choices and achieve  a  maximum load
of  $\frac{\log\log n}{d \phi_d } +\Oh(1)$ whp, where $1\le \phi_d\le 2$ is  an specified constant.
 In this algorithm, 
the bins are  partitioned  into  $d$  groups of size $n/d$ and  each ball picks one random bin from each  group. The ball is then allocated  in a least loaded bin among the chosen bins and   ties are broken  asymmetrically. 

In many applications
selecting  any  random set of choices is costly. 
For example, in  peer-to-peer or cloud-based systems  balls (jobs, items,...) and bins (servers, processors,...) are randomly placed in a metric space (e.g., $\mathbb{R}^2$) and  the balls have to be allocated on bins that are close to them as it minimizes  the  access latencies. With regard to such applications,
Byer et al. \cite{BCM04} studied a model, where $n$ bins (servers) are  uniformly at random placed on a  geometric space.  Then each ball in turn picks $d$ locations in the space and allocates itself on a nearest neighboring bin with minimum load among other $d$ bins. In this scenario, the probability that a location close to a server is chosen  depends on the 
distribution of other servers in the space and hence there is no a  uniform distribution over the potential choices. Here, the authors  showed the maximum load is $\frac{\log\log n}{\log d}+\Oh(1)$ whp. 
Later on, 
Kenthapadi and Panigrahy \cite{KP06} proposed a model in which bins are interconnected as  a $\Delta$-regular graph and  each  ball picks a random edge of the graph.  It is then  placed at one of  its endpoints  with smaller load. This allocation algorithm results in a maximum load of $\log\log n+\Oh\left(\frac{\log n}{\log (\Delta/\log^4 n)}\right)+\Oh(1)$.
Peres et al. \cite{PTW14} also considered a similar model where   number of balls $m$ can be much larger than $n$ (i.e., $m\gg n$) and the graph is not necessarily regular. Then, they established  upper bound $\Oh(\log n/\sigma)$ for the gap between the maximum and the minimum loaded bin after allocating $m$ balls, where $\sigma$ is the edge expansion of the graph.
Following the study of balls-into-bins with correlated choices,  
Godfrey \cite{God08} generalized the model introduced by  Kenthapadi and Panigrahy  such that each ball picks an random edge of a hypergraph that has  $\Omega(\log n)$ bins and satisfies  some mild conditions. Then
he showed that the maximum load is a constant whp.

Recently, Bogdan et al. \cite{BSS013} studied a model where each ball picks a random node and performs a local search from the node to find a node with local minimum load, where it is  finally placed on. They showed that when  the graph is a constant degree  expander, the local search guarantees a maximum load of $\Theta(\log\log  n)$ whp.

\paragraph{\bf Our Results.}{
	
	In this paper, we study balls-into-bins models, where each ball chooses a set of related bins. 
	We propose  allocation  algorithms for  allocating $n$ sequential balls into   $n$ bins that are  organized as a   $d$-regular $n$-vertex graph $G$.
	Let $l$ be a given positive integer.
	A non-backtracking random walk (NBRW) ${W}$    of  length $l$ started from a node is a  random  walk in $l$  steps so that in each  step the walker picks a neighbor uniformly at random and moves to that neighbor   with an additional property that the walker never traverses an edge twice in a row. Further information about NBRWs  can be found in   \cite{AL07} and \cite{AL09}. 
	Our allocation algorithm, denoted by $\mathcal{A}(G, l)$,
	is based on a random  sampling of bins from   the neighborhood of a given node  in $G$ by a NBRW from the node.
	The algorithm proceeds as follows:
	In each round  $t$, $1\le t\le n$, ball $t$ picks a node of $G$ uniformly at random and performs a 
	NBRW $W=(u_0, u_1\ldots, u_l)$, called \emph{$l$-walk}. After that  the ball allocates itself on one of the visited nodes with minimum load and ties are broken randomly.
	Our result concerns  bounding   the maximum load attained by $\mathcal{A}(G, l)$, denoted by $m^*$, in terms of $l$.
	Note that if the balls are allowed to take NBRWs of length $l=\Omega(\log n)$ on a graph with girth at least $l$, then the visited nodes by each ball generates a random hyperedge of size $l+1$. Then applying the Godfrey's result \cite{God08} implies a constant  maximum load whp. So, for the rest of the paper we  focus on  NBRWs of  sub-logarithmic length (i.e., $l=o(\log_ d n)$).
	We also assume that $l=\omega(1)$ and  $G$
	is a  $d$-regular  $n$-vertex graph with girth
	at least $\omega(l\log\log n)$ and   $d=\omega(\log n)$. However, when    \mbox{$l=\lfloor(\log n)^{\frac{1+\epsilon}{2}}\rfloor$}, for any constant $\epsilon\in(0,1)$, $G$ with girth at least $\omega(l)$ suffices  as well. 
	It is worth mentioning that there exist several explicit families of $n$-vertex $d$-regular graph with arbitrary degree $d\ge 3$ and  girth $\Omega(\log_d n)$ (see e.g. \cite{Dahan14}).
	
	In order to present the upper bound, we consider two cases:
	\begin{itemize}
		\item[I.] If  $l\ge 4\gamma_G$, where $\gamma_G=\sqrt{\log _d n}$, then 
		we show that whp, \[m^\ast=\Oh\left(\frac{\log\log n}{\log(l/\gamma_G)}\right).\] 
		Thus, for a given $G$ satisfying the girth condition,  if we set  \mbox{$l=\lfloor(\log_d n)^{\frac{1+\epsilon}{2}}\rfloor $}, for any constant $\epsilon\in(0,1)$,   then we have
		\mbox{$l/\gamma_G\ge (\log n)^{\epsilon/2}$} and  by applying the above upper bound we have $m^*=\Oh(1/\epsilon)$ whp.  
		\item[II.]
		If $\omega(1) \le l\leq 4\cdot \gamma_G $, then we show that whp, 
		\[m^*=\Oh\left(\frac{\log_d n\cdot \log\log n}{l^2}\right).\]
	\end{itemize} 
	 In addition to the upper bound, we prove that whp,
	\[ m^*=\Omega\left(\frac{\log_d n}{l^2}\right)\]
	(for a proof see Appendix \ref{App:lower}). 
	If $G$ is a $d$-regular graph with $d\in[3, \Oh(\log n)]$, then  
	we slightly modify allocation algorithm $\mathcal{A}(G,l)$ and show the similar results for $m^*$ in $l$.
	The algorithm $\mathcal{A'}(G, l)$
	for sparse graphs proceeds as follows: 
	Let us first define  parameter
	\[r_G=\lceil2\cdot\log_{d-1}\log n\rceil.
	\] 
	For each ball $t$, the ball takes a NBRW of size $l\cdot r_G$, say 
	$(u_0, u_1,\cdots,u_{lr_G})$, and then
	a subset of visited nodes,   {\mbox{$\{u_{j\cdot r_G} ~|~0\le  j\le  l\}$}}, called {\it potential choices},
	is  selected and finally the ball is allocated on a least-loaded node of potential choices (ties are broken randomly). Provided $G$ has sufficiently large girth, we show the similar upper and lower bounds as the allocation algorithm $\mathcal{A}(G,l)$ on $d$-regular graphs with $d=\omega(\log n)$ (see Appendix \ref{sec:sparse} ). 
}



\paragraph{\bf Comparison with Related Works.}{The setting of our work is closely related to \cite{BSS013}. In this paper in each step a ball picks a node of a graph uniformly at random and performs a  local search to find a node with local minimum  load and finally allocates itself on  it. They showed that  with high probability the local search on expander graphs obtains a maximum load of $\Theta(\log\log n)$.     In comparison to the mentioned result, our new protocol achieves a further reduction in the maximum load, while still allocating a ball close to its origin.  
	Our result suggests a trade off between allocation time and maximum load.
	In fact we  show a constant upper bound for  sufficient long walks (i.e., $l=(\log n) ^{\frac{1+\epsilon}{2}} $, for any constant $\epsilon \in (0,1)$). 
	Our work  can also be related to the one by 
	Kenthapadi and Panigrahy where each ball picks a random edge from a $n^{\Omega(1/\log\log n)}$-regular graph and  places itself   on one of the endpoints of the edge with smaller load. This model results into a maximum load of $\Theta(\log\log n)$.
	Godfrey   \cite{God08}  considered  balanced allocation on hypergraphs where  balls choose a  random edge  $e$ of a hypergraph satisfying  some conditions, that is, first the size of each edge $s$ is $\Omega(\log n)$ and  $\Pr{u\in e}=\Theta(\frac{s}{n})$ for any bin $u$.  The latter one is called \emph{balanced condition}.
	Berenbrink et al.  \cite{BBFN12} simplified  Godfrey's proof and slightly  weakened  the balanced condition but since   both analysis apply a Chernoff bound, it seems unlikely that one can extend the analysis for hyperedges of size $o(\log n)$. 
	Our model can also be viewed as
	a balanced allocation on hypergraphs,  because  every $l$-walk is  a random hyperedge of size $l+1$ that also satisfies the balanced condition (see Lemma \ref{first}). By setting the right parameter for $l=o(\log n)$, we show  that the algorithm achieves a constant maximum load with sub-logarithmic number of choices.

	In a different context, Alon and Lubetzky  \cite{AL09} showed that  if a particle starts a NBRW of length $n$ on  $n$-vertex graph with high-girth then the number of  visits to nodes has a Poisson distribution. In particular they showed that   the maximum visit to a node is at most $(1+o(1))\cdot\frac{\log n}{\log\log n}$. Our result can be also seen as an application of the mathematical concept of NBRWs to task allocation in distributed networks.
}
\paragraph{\bf Techniques.}{ 
	To derive a lower bound for the maximum  load  we first show that  whp there is a path of length $l$ which is traversed  by at least $\Omega\left({\log_d n}/{l}\right)$ balls. Also, each path  contains $l+1$  choices and hence, by pigeonhole principle there is a node with 
	load at least  $\Omega\left({\log_d n}/{l^2}\right)$, which is a lower bound for $m^*$. 
	We establish the upper bound based  on \emph{witness graph} techniques.
	In our model,
	the potential  choices for each ball are highly correlated, so the technique for building the witness graph is somewhat different from the one for standard balls-into-bins.
	Here we propose a new approach for constructing  the witness graph.
	We also show a  key 
	property of the algorithm,  called  
	\emph{$(\alpha, n_1)$-uniformity}, that is useful for our proof technique. 
	We say an allocation algorithm is $(\alpha, n_1)$-uniform if 
	the
	probability that, for every   $1\le t\le n_1$, ball $t$  is placed on an arbitrary node is bounded by $\alpha/n$,
	where $n_1=\Theta(n)$ and  $\alpha=\Oh(1)$.	Using this property we conclude that  
	for a given set of nodes of size $\Omega(\log n)$, after allocating $n_1$ balls,  the average load of nodes in the set is some constant  whp.  Using   {witness graph} method we show that if there is a node with load larger than some threshold   then there is a collection of nodes of  size $\Omega(\log n)$ where each of them has load larger than some specified  constant. Putting these together implies that 
	after allocating $n_1$ balls the maximum load, say $m_1^*$, is bounded as required  whp. To derive an upper bound for the maximum load after allocation $n$ balls,  we divide the allocation process into $n/n_1$ phases and  show that  the maximum load at the end of each phase increases by at most $m_1^*$ and hence $m^*\le (n/n_1)m_1^*$ whp. 

}		

\paragraph{\bf Discussion and Open Problems.}{
		In this paper, we proposed  balls-into-bins model, where
		each ball picks a set of nodes that are visited by a NBRW of length $l$ and place itself on a visited node with minimum load. One may ask whether  it is possible to replace a NBRW of length $l$ by  several parallel random walks of shorter length (started from the same node) and get the  similar results?
	 
	 In our result we constantly use the assumption that the graph locally looks like a $d$-ary tree.
		 It is also known that   cycles in random regular graph are restively far from each other (e.g, see \cite{CooperFR09}), so  we believe  that  our approach can be extended  for balanced allocation on random regular graphs.

		 Many works in this area (see e.g.\cite{BSS013,KP06}) assumed that  the underlying networks   is regular, it would be interesting  to investigate random walk-based algorithms for irregular graphs.
			
			}	

\paragraph{\bf Outline.}{In Section \ref{sec:defi}, we present notations and some preliminary results that are required for the analysis of the algorithm. In Section \ref{sec:witness} we show how to construct a witness graph and then in Section \ref{main thm} by applying the results we the upper bound for the maximum load.
	}
\section{Notations, Definitions and Preliminaries}\label{sec:defi}
	In this section we provide  notations, definitions and  some preliminary results. 
	A {\emph{non-backtracking random walk}} (NBRW) ${W}$    of  length $l$ started from a node is a simple  random  walk in $l$  steps so that in each  step the walker picks a neighbor uniformly at random and moves to that neighbor   with an additional property that the walker never traverses an edge twice in a row. Throughout  this paper we assume that  $l\in [\omega(1), o(\log_d n)]$ is a given parameter and $G$ is a $d$-regular graph with  girth $10\cdot l\cdot \log \log n$. Note that  we will see that the  condition on  the girth can be relaxed to $\omega(l)$, for any $l$ higher than ${(\log_d n)^{\frac{1+\epsilon}{2}}}$, where $\epsilon\in(0, 1)$ is a
	constant.  
	
	It is easy to see that  the visited nodes by a non-backtracking walk of length $l$  on $G$ induces a path of length $l$, which is called an {\it $l$-walk}. 
	For simplicity, we use $W$ to denote both the $l$-walk and the set of visited nodes by the $l$-walk.  
	Also, we define $f(W)$ to be  the number of balls in a least-loaded node of $W$. 
	The {\it height } of  a ball allocated on a node is the number balls that are placed on the node  before the ball.
	
	For every two nodes $u, v\in V(G)$, let $d(u, v)$ denote the length of shortest path  between $u$ and $v$ in $G$.
	Since $G$ has girth at least $\omega(l)$, every path of length at most $l$ is  specified by its endpoints, say $u$ and $v$. So  we denote the path  by interval $[u, v]$. 
	Note that for any graph $H$, $V(H)$ denotes the vertex set of $H$.

	\begin{defi}[Interference Graph]For every given pair $(G, l)$, the interference graph $ \mathcal{I}({G, l})$ is defined as follows:  The vertex set of   $ \mathcal{I}({G, l})$ is the set of all $l$-walks  in $G$
		and two vertices $W$ and $W'$ of  $ \mathcal{I}({G, l})$  are connected if and only if  $W\cap W'\neq \emptyset$.
		Note that if  pair $(G, l)$ is clear from the context, then the interference graph is denoted by  $\mathcal{I}$.
	\end{defi} 
	Now, let us  interpret 
	allocation process $\mathcal{A}(G, l)$ as follows: 
	
	For every ball $1\le t\le n$, the algorithm picks a vertex of $\mathcal{I}(G, l)$, say $W_t$, uniformly at random and then allocates ball $t$ on a least-loaded node of $W_t$ (ties are broken randomly).
	Let $1\le n_1 \leq n$ be a given integer and assume that $\mathcal{A}(G, l)$  has allocated balls   until the $n_1$-th ball.
	We then define  $\mathcal{H}_{n_1}(G, l)$ to be the induced subgraph of  $\mathcal{I}(G, l)$ by 
	\mbox{ $\{W_t : 1\le t \le n_1\}\subset V(\mathcal{I})$.} 
	\begin{defi}
		Let $\lambda$ and  $\mu$ be given positive integers. We say rooted tree $T\subset \mathcal{I}(G, l)$ is a $(\lambda, \mu)$-tree if $T$ satisfies:
		\begin{itemize}
			\item[1)] $|V(T)|=\lambda$,
			\item[2)] $|\cup_{W\in V(T)} W|\ge \mu$.
		\end{itemize}
		Note that the latter condition is well-defined because every vertex of $T$ is an $(l+1)$-element subset of $V(G)$. 
		A $(\lambda, \mu)$-tree $T$ is called $c$-loaded, if  $T$ is contained in  $\mathcal{H}_{n_1}(G, l)$, for some $1\le n_1\le n$, and every node in $\cup_{W\in V(T)}W$ has load at least $c$.
	\end{defi}

	\subsection{Appearance Probability of a $c$-Loaded $(\lambda, \mu)$-Tree }
	In this subsection we formally define the notion of $(\alpha, n_1)$-uniformity for allocation algorithms,  and then present our key lemma concerning the uniformity of $\mathcal{A}(G, l)$. By using this lemma  we establish an upper bound for the probability that a 
	$c$-loaded $(\lambda, \mu)$-tree contained in $\mathcal{H}_{n_1}$ exists.
	The proof of the following lemmas can be found in Appendix \ref{Appendix-1}.
	\begin{defi}
		Suppose that $\mathcal{B}$ be an  algorithm that allocates  $n$ sequential  balls into $n$ bins. 
		Then we say $\mathcal{B}$ is {\it $(\alpha, n_1)$-uniform} if,
		for every $1\le t\le n_1$ and   every bin $u$, after allocating $t$ balls we have that 
		\[
		\Pr{ \text{ball $t+1$ is allocated on $u$ }}\leq \frac{\alpha}{n},
		\]
		where $\alpha$ is some constant and $n_1=\Theta(n)$. 
	\end{defi}
	\begin{lem}[Key Lemma]\label{lem:uniform}
		$\mathcal{A}(G, l)$~is an  $(\alpha, n_1)$-uniform allocation algorithm, where  \mbox{$n_1=\lfloor n/(6\mathrm{e}\alpha)  \rfloor$}.
		\end{lem}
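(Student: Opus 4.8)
The plan is to bound the landing probability by separating the event ``ball $t+1$ reaches $u$'' from the event ``$u$ is the chosen (minimum-load) node of the walk $W$.'' First I would compute the visiting probability using the girth hypothesis. Since $G$ has girth $\omega(l)$, within $l$ steps a non-backtracking walk behaves exactly as on the infinite $d$-regular tree, so the $l+1$ visited nodes are distinct and $u_i$ lies at distance exactly $i$ from $u_0$. For each position $0\le i\le l$ there are $d(d-1)^{i-1}$ candidate starting nodes at distance $i$ from $u$, and from each the unique geodesic to $u$ is traversed with probability $\tfrac1n\cdot\tfrac1{d(d-1)^{i-1}}$; hence $\Pr{u_i=u}=1/n$ for every $i$, and by disjointness $\Pr{u\in W}=(l+1)/n$. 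This bound by itself only yields $\alpha=l+1$, so the entire gain must come from the minimum-selection rule.

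Next I would write, with $h_v$ denoting the current load of bin $v$ and $Z_W$ the number of least-loaded nodes of $W$,
\[
\Pr{\text{ball } t+1 \text{ on } u}=\Ex{\frac{\mathbbm{1}[\,u\in W,\ u\text{ is least-loaded in }W\,]}{Z_W}},
\]
and split according to the size of $Z_W$. Put $K=\lceil l/2\rceil$. On the event $Z_W\ge K$ the tie-breaking factor is at most $1/K$, so this part contributes at most $\tfrac1K\Pr{u\in W}=\tfrac{l+1}{Kn}=\Oh(1/n)$, which is already of the desired order. The whole difficulty is therefore concentrated in the complementary event $Z_W<K$: here fewer than $l/2$ nodes of $W$ attain the minimum, so more than $l/2$ of the $l+1$ nodes of $W$ carry load strictly larger than $f(W)=h_u$, i.e.\ more than $l/2$ nodes of $W$ are ``heavy'' (occupied above level $h_u$).

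The crux is thus to show that
\[
\Pr{\,u\in W\ \text{and}\ \#\{v\in W: h_v> h_u\}>l/2\,}=o(1/n).
\]
I would control this using that only few bins are occupied: after $t\le n_1=\Theta(n)$ balls the total load is at most $n_1$, so the heavy bins form a set of density a small constant fraction of $V(G)$ --- this is exactly where the threshold $n_1=\lfloor n/(6\mathrm{e}\alpha)\rfloor$ is chosen, so as to force this fraction below, say, $1/4$. A single $l$-walk can meet more than $l/2$ heavy bins only if these sparse bins line up along one path; to quantify this I would use the pairwise estimate $\Pr{u,v\in W}\le \tfrac{2(l+1)}{n(d-1)^{d(u,v)}}$, obtained by the same tree count as above, whose rapid decay (sharpened by $d=\omega(\log n)$) expresses that a fresh walk spreads out quickly. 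Combined with the tree geometry from the girth assumption, a sparse and spread-out heavy set cannot fill an $\Omega(l)$-fraction of a fresh walk except with probability exponentially small in $l$; since $l=\omega(1)$, this factor $\mathrm{e}^{-\Omega(l)}$ beats the $\tfrac{l+1}{n}$ coming from $u\in W$, leaving $o(1/n)$.

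The main obstacle is precisely this last step, because for a \emph{worst-case} load configuration the heavy bins could be packed into the radius-$l$ ball around $u$ (which has only $d^{\,l}=n^{o(1)}\ll n$ nodes), and then every walk through $u$ would be heavy, giving $\Pr{\text{on }u}\approx l/n$. I therefore expect the honest argument to rely on the fact that the configuration is produced by $\mathcal{A}(G,l)$ itself --- so that loaded bins stay spread out along the process --- rather than on a worst-case bound; concretely I would carry an inductive invariant on the load distribution, itself controlled by the very $(\alpha,n_1)$-uniformity being established, so that the required sparsity-plus-spreading hypothesis is available at step $t+1$. Closing this induction fixes the absolute constant $\alpha$ through the self-referential relation $n_1=\lfloor n/(6\mathrm{e}\alpha)\rfloor$, after which the two contributions combine to give $\Pr{\text{ball } t+1 \text{ on } u}\le\alpha/n$.
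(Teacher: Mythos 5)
Your overall architecture --- the visiting probability $(l+1)/n$, dilution by tie-breaking over many minimum-load nodes, and a residual bound on the event that the walk meets many occupied nodes --- matches the paper's proof, and your first computation is exactly the paper's Lemma~\ref{first}. But there is a genuine gap at precisely the point you flag yourself: you never exhibit the invariant that defeats the packing obstruction, and the route you sketch for it (global sparsity of occupied bins plus the pairwise estimate on $\Pr{u,v\in W}$) is not a mechanism that can work, since no spread property of a single fresh walk rules out the occupied bins having accumulated inside one radius-$l$ ball --- your own counterexample. The paper's resolution is a \emph{local} invariant: after $t\le n_1$ rounds, every vertex $u$ still has at least $d/2$ empty neighbors, i.e.\ $\Empty_t(N(u))\ge d/2$. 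Granting this, no correlation-decay estimate is needed at all: each step of the NBRW chooses uniformly among $d$ or $d-1$ fresh neighbors, of which at least $d/2-1$ are empty, so conditionally on the whole past each visited node is empty with probability at least $1/3$; the domination lemma for moderately dependent indicators (Lemma~\ref{mod-chernoff}) plus Chebyshev then give $\Pr{\neg\mathcal{F}_{t+1}\mid C_i}\le 12/l$, where $\mathcal{F}_{t+1}$ is the event that at least $l/10$ nodes of $W_{t+1}$ are empty. Note also that an $O(1/l)$ failure bound already suffices here (your insistence on a factor $\mathrm{e}^{-\Omega(l)}$ is unnecessary), because it is multiplied by $\Pr{C_i}=1/n$ and summed over $l+1$ positions, yielding $O(1/n)$ in total; this is the content of the paper's Lemma~\ref{lem:Upper1}.

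The second missing piece is how that invariant is maintained, which is where $d=\omega(\log n)$ and the specific value $n_1=\lfloor n/(6\mathrm{e}\alpha)\rfloor$ enter. The paper runs a potential-function bootstrap: with $a_t(u)$ the number of occupied neighbors of $u$, set $\Phi(t)=\sum_{u}\exp(a_t(u))$. Conditioned on $\Phi(t)\le n\mathrm{e}^{d/4}$, every $a_t(u)$ is below $\log n + d/4 < d/2$ (this is exactly where $d=\omega(\log n)$ is used), so the conditional uniformity bound of Lemma~\ref{lem:Upper1} applies and gives the drift estimate $\Ex{\Phi(t+1)}\le\left(1+\frac{\alpha\mathrm{e}d}{n}\right)\Phi(t)$; truncating $\Phi$ and applying Markov's inequality at time $n_1$ shows $\Phi(n_1)\le n\mathrm{e}^{d/4}$ with probability $1-\mathrm{e}^{-d/12}$, which converts back into the neighbor-emptiness invariant for all $t\le n_1$. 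This is how the circularity you describe is actually closed: uniformity is proved \emph{conditionally} on the invariant, and the potential recursion only ever invokes that conditional statement, so there is no self-reference in fixing $\alpha$. Your proposal correctly identifies the obstacle but supplies neither the local invariant nor the bootstrap that sustains it, and those two ingredients are the substance of the paper's proof.
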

			In the next lemma, we derive an upper  bound for  the appearance probability a $c$-loaded $(\lambda, \mu)$-tree, whose proof  is inspired by \cite[Lemma 2.1]{KP06}.
	\begin{lem}\label{lem:tree}
		Let  $\lambda$, $\mu$ and $c$ be positive integers.
		Then the probability that there exists  a $c$-loaded $(\lambda, \mu)$-tree contained in $ \mathcal{H}_{n_1}(G, l)$ 		is at most
		\[
		n\cdot\exp({4\lambda\log(l+1) -c\mu}).
		\]

	\end{lem}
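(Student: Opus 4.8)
The plan is to bound the probability by a union bound over all embedded $(\lambda,\mu)$-trees $T\subset\mathcal I(G,l)$, writing the target event as the conjunction of two sub-events: (i) all $\lambda$ walks of $T$ are chosen by some ball up to time $n_1$, i.e. $T\subseteq\mathcal H_{n_1}(G,l)$, and (ii) each of the $\ge\mu$ nodes in $U_T:=\bigcup_{W\in V(T)}W$ receives final load at least $c$. The factor $\exp(-c\mu)$ will come from (ii), and the factor $n\exp(4\lambda\log(l+1))=n(l+1)^{4\lambda}$ from enumerating and ``paying for'' the walks in (i); these two estimates are then multiplied after the union bound.

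For (ii), I would fix any $\mu$ of the loaded nodes and bound the probability that each receives at least $c$ balls. Choosing, for each such node, a set of $c$ balls among the first $n_1$ and ordering all $c\mu$ chosen balls in time, the $(\alpha,n_1)$-uniformity of $\mathcal A(G,l)$ (Lemma \ref{lem:uniform}) gives, conditionally on the entire past, probability at most $\alpha/n$ that a given ball lands on its designated node; the tower property then yields the product bound $(\alpha/n)^{c\mu}$. Summing over the at most $\binom{n_1}{c}^{\mu}\le (n_1^{c}/c!)^{\mu}$ choices of witnessing balls gives $\big((n_1\alpha/n)^{c}/c!\big)^{\mu}$, and since $n_1=\lfloor n/(6\mathrm{e}\alpha)\rfloor$ forces $n_1\alpha/n\le 1/(6\mathrm{e})\le \mathrm{e}^{-1}$, this is at most $\exp(-c\mu)$.

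For (i), the point is that a naive enumeration of trees in $\mathcal I(G,l)$ already costs about $(d-1)^{l}$ per walk (there are $\Theta\big(nd(d-1)^{l-1}\big)$ walks), which is far too large; the remedy is to weight each enumerated walk by the probability $p=\Theta\big(1/(nd(d-1)^{l-1})\big)$ that a ball traverses it, so that these factors cancel. Concretely, I would build $T$ by a traversal: the root contributes $\sum_{\text{walks}} n_1 p=\Theta(n_1)=\Oh(n)$; each subsequent walk is attached to its parent by selecting the shared node on the parent ($\le l+1$ ways) and the position of that node on the child ($\le l+1$ ways), then paying $n_1 p$ for the child to be chosen, so that the number of walks through a fixed node at a fixed position cancels against $p$ and the per-child cost is $\Oh\big((l+1)^{2}n_1/n\big)$. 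Including at most $4^{\lambda}$ rooted tree shapes and using $n_1/n\le 1/(6\mathrm{e}\alpha)$, the total appearance weight is at most $n\cdot 4^{\lambda}\big((l+1)^{2}/(3\mathrm{e})\big)^{\lambda-1}\le n(l+1)^{4\lambda}$ for $l=\omega(1)$.

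Multiplying the two estimates and summing over trees gives $n\cdot\exp\big(4\lambda\log(l+1)-c\mu\big)$, as desired. The main obstacle is the interaction between (i) and (ii): the walk choices are mutually independent and independent of the placement rule, whereas the loads in (ii) are governed by the history-dependent placements, and moreover each of the $\lambda$ tree-balls is itself placed on $U_T$ and so also contributes to the loads counted in (ii). I would handle this by processing all distinguished balls in a single time order, applying the conditional form of Lemma \ref{lem:uniform} to the placement steps and the independence of the walk draws to the choice steps; a ball playing both roles can be charged once, which costs at most an $\mathrm{e}^{\lambda}$ factor that is comfortably absorbed by the slack between the $(l+1)^{2\lambda}$ actually used and the claimed $(l+1)^{4\lambda}$. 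Verifying that this combined conditioning indeed produces the clean product $n(l+1)^{4\lambda}\exp(-c\mu)$ is the delicate part of the argument.
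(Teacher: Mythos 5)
Your proposal is correct and takes essentially the same route as the paper's proof: a union bound over all $(\lambda,\mu)$-trees in which the structural event is paid for by the walk-choice probabilities (your weighted traversal is exactly the paper's count $4^{\lambda}|V(\mathcal{I})|\Delta(\mathcal{I})^{\lambda-1}$ from Corollary \ref{cor:count} multiplied by $p_1\le (n/|V(\mathcal{I})|)^{\lambda}$, using $n\Delta(\mathcal{I})/|V(\mathcal{I})|\le 2(l+1)^2$), and the load event is paid for via the $(\alpha,n_1)$-uniformity of Lemma \ref{lem:uniform}, yielding the $\exp(-c\mu)$ factor, after which the two bounds are multiplied. The only notable difference is that you explicitly flag and patch the correlation between the two events (a tree-ball also contributing to the loads), absorbing the resulting $\mathrm{e}^{\lambda}$ overcharge into the slack below $(l+1)^{4\lambda}$, whereas the paper disposes of this point in one line by asserting that the balls are mutually independent; your treatment is, if anything, the more careful one.
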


\section{ Witness Graph}\label{sec:witness}

In this section, we show that if there is a node whose load is larger than a threshold,  then we can construct a $c$-loaded $(\lambda, \mu)$-tree contained in $\mathcal{H}_{n_1}(G, l)$.
   Our construction is based on an  iterative application of a $2$-step procedure,  called {\sf Partition-Branch}.
   	Before we explain the construction, we draw the reader's attention  to the following remark:
   	\paragraph{Remark.}{ 
   		The intersection (union)   of two arbitrary graphs is a graph whose vertex set  and edge set are  the intersection (union) of the vertex  and edge sets of those graphs. Let  $\cap_g$ and $\cup_g$ denote the graphical intersection and union.
   		Note that we use $\cap$ $(\cup)$ to denote the set intersection (union) operation.
   		Moreover, since $G$ has girth $\omega(l)$, the graphical intersection of every two $l$-walks in  $G$ is either empty or a path (of length $\le l$). Recall that $W$ denotes both an $l$-walk and the set of nodes in the $l$-walk.}
 
 	\begin{figure}[t]
 		\vspace{0cm}
 		\centering
 		\includegraphics[width=1\textwidth]{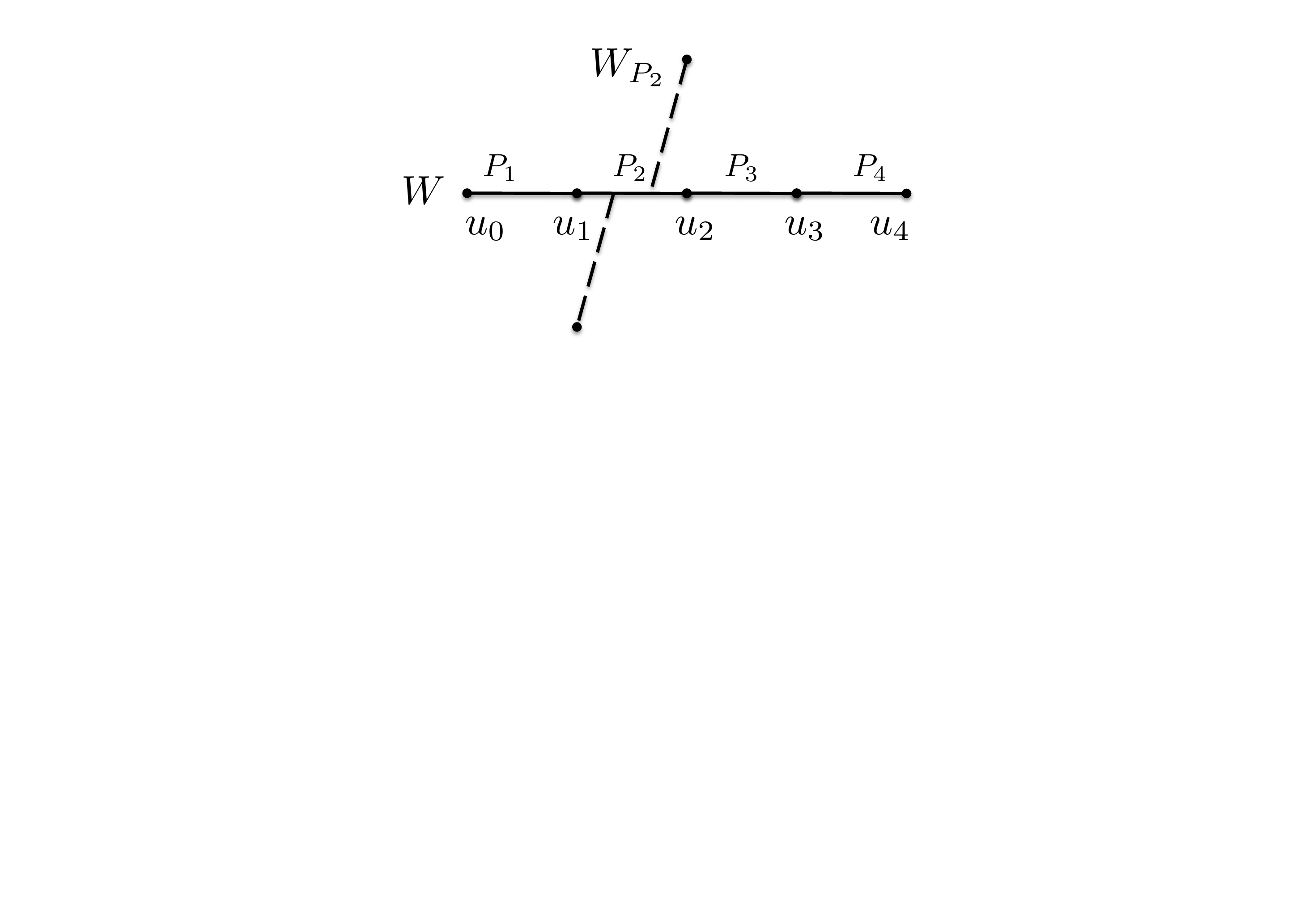}
 		\vspace{-6cm}
 		\caption{\small{\small The {\sf Partition} step on $W$ 
 				for $k=4$ and  the {\sf Branch} step for $P_2$ 
 				that gives $W_{P_2}$, shown by dashed line.}\label{fig1} }
 	\end{figure}

\paragraph{Partition-Branch.}
	{
		 Let  $k\ge 1$ and $\rho\ge 1$ be given integers and $W$ be an   $l$-walk
	  with $f(W)\ge \rho+1$. The
		{\sf Partition-Branch} procedure on $W$ with parameters $\rho$ and $k$, denoted by $PB(\rho, k)$,  proceeds as follows:
		\begin{itemize}
			\item[]{\sf Partition}: It   partitions $W$   into $k$ edge-disjoint subpaths:
			\[ 
			\mathcal{P}_k(W)=\{[u_{i}, u_{i+1}] \subset W, 0\le i \le k-1\},
			\]
			where $d(u_{i}, u_{i+1})\in \{ \lfloor l/k\rfloor, \lceil l/k\rceil  \}$. 
			\item[]{\sf Branch}: For a given $P_i=[u_{i}, u_{i+1}] \in\mathcal{P}_k(W)$, it finds  (if exists) another $l$-walk $W_{P_i}$ intersecting $P_i$
		that   satisfies  the following  conditions:
			\begin{itemize}
				\item[(C1)]$\emptyset\neq W_{P_i}\cap W\subseteq P_i\setminus\{u_i,u_{i+1}\}$.
				\item[(C2)] ${f}(W_{P_i})\ge f(W)-\rho$.
				
			\end{itemize}
		\end{itemize}
	}

		We say procedure $PB(\rho, k)$ on a given  $l$-walk $W$ is {\it valid}, if 
		for every $P\in \mathcal{P}_k(W)$, $W_P$   exists. We usually refer to  $W$ as the father of $W_P$.
		For a graphical view of the {\sf Partition-Branch} procedure  see 
		Figure \ref{fig1}.
	
		\begin{defi}[Event $\mathcal{N}_\delta$]
			For any given $1\le \delta\le l$, we say that  event  $\mathcal{N}_\delta$ holds,  if after allocating at most $n$ balls by $\mathcal{A}(G, l)$, every path of length $\delta$ is contained in less than $6\log_{d-1}n/\delta$ $l$-walks that are randomly chosen by $\mathcal{A}(G, l)$.
			
		\end{defi} 
		
		For the sake of construction, let us  define a set of parameters, depending on $d$, $n$, and $l$, which are used throughout the paper
		\begin{align*}
		k&:=\max\{4, \lfloor l/\sqrt{\log_d n}\rfloor \},\\
		\delta&:=\lfloor\lfloor l/k\rfloor/4\rfloor,\\
		\rho&:= \lceil 6 \log_d n /\delta^2\rceil.
		\end{align*}

		\begin{lem}\label{lem:PB}
			Suppose that event $\mathcal{N}_\delta$ holds and  $W$ be an $l$-walk with 
			\[f(W)\ge \rho+1.
			\]
			Then 
			the procedure $PB(\rho, k)$ on $W$ is valid. 
		\end{lem}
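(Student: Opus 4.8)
The plan is to verify, for a fixed subpath $P_i=[u_i,u_{i+1}]\in\mathcal{P}_k(W)$, that the \textsf{Branch} step succeeds, i.e.\ that some $l$-walk $W_{P_i}$ satisfying (C1) and (C2) exists; since $P_i$ is arbitrary this makes $PB(\rho,k)$ valid. The starting observation is that $f(W)\ge\rho+1$ forces every node of $W$, and in particular every node of $P_i$, to have load at least $f(W)$. I would then isolate the \emph{core} of $P_i$, namely the set $Q_i$ of nodes of $P_i$ at distance at least $\delta$ from both $u_i$ and $u_{i+1}$. Because $\delta=\lfloor\lfloor l/k\rfloor/4\rfloor$ we have $|P_i|\ge\lfloor l/k\rfloor\ge 4\delta$, so $Q_i$ is a subpath of length at least $2\delta$, i.e.\ it contains at least $2\delta+1$ nodes. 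The role of the two length-$\delta$ buffers separating $Q_i$ from the endpoints will be to guarantee that any walk which reaches an endpoint must traverse a full length-$\delta$ segment, which is precisely what $\mathcal{N}_\delta$ controls.

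The second ingredient is a height-to-$f$ correspondence. When a ball is placed on a node $w$ at height $h$, $w$ is a least-loaded node of the walk that placed it, so that walk $W'$ has $f(W')=h$ at that moment and hence $f(W')\ge h$ in the final configuration. Consequently every ball sitting at height at least $f(W)-\rho$ on a node of $Q_i$ is deposited by a (distinct, since each round deposits exactly one ball) $l$-walk satisfying (C2). Since each node of $Q_i$ has load at least $f(W)$, it carries at least $\rho$ such ``high'' balls, so summing over the at least $2\delta+1$ nodes of $Q_i$ yields at least $(2\delta+1)\rho$ candidate walks, all meeting (C2) and all passing through the interior of $P_i$ (so the nonemptiness part of (C1) is automatic).

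It remains to produce one candidate that also meets the containment part of (C1), i.e.\ whose intersection with $W$ avoids $u_i$ and $u_{i+1}$. Here I would call a candidate \emph{bad} if its intersection with $W$ reaches an endpoint. Using the large-girth assumption, $W'\cap W$ is a single subpath of $W$; if it contains a core node $w$ and reaches $u_i$, then it contains the fixed length-$\delta$ segment joining the $u_i$-side boundary of $Q_i$ to $u_i$. Every bad-towards-$u_i$ walk therefore contains one common length-$\delta$ path, so by $\mathcal{N}_\delta$ there are fewer than $6\log_{d-1}n/\delta$ of them; the same holds on the $u_{i+1}$ side, giving fewer than $12\log_{d-1}n/\delta$ bad walks in total, each removing at most one candidate. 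Thus the number of good candidates is at least $(2\delta+1)\rho-12\log_{d-1}n/\delta$, and substituting $\rho=\lceil 6\log_d n/\delta^2\rceil$ shows this is positive; any good candidate can serve as $W_{P_i}$.

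The step I expect to be the main obstacle is this final count, which is genuinely tight: the leading term $(2\delta+1)\rho\approx 12\log_d n/\delta$ and the bad-walk bound $12\log_{d-1}n/\delta$ are nearly equal, and the inequality survives only because the ceiling in $\rho$ (equivalently the extra node of $Q_i$) contributes an extra $\Theta(\log_d n/\delta^2)$ that dominates the tiny gap $\log_{d-1}n-\log_d n$, using crucially that $d=\omega(\log n)$ makes $\log_{d-1}n$ and $\log_d n$ agree to within a negligible factor. The accompanying structural facts—that large girth makes every pairwise intersection of $l$-walks a single path, and that core nodes sit at distance at least $\delta$ from the endpoints—are where I would be most careful, since they are what license the clean application of $\mathcal{N}_\delta$.
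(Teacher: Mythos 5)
Your proof is correct and follows essentially the same route as the paper's: the same decomposition of $P_i$ into two length-$\delta$ buffer segments and a core of length at least $2\delta$, the same identification of candidate walks from balls at height at least $f(W)-\rho$ on core nodes (which yields (C2) since loads never decrease), and the same application of $\mathcal{N}_\delta$ to the two buffer paths to bound the bad candidates by $12\log_{d-1}n/\delta$. Your closing remark about the tightness of the final count is apt: the paper silently passes from $\rho=\lceil 6\log_d n/\delta^2\rceil$ to the bound $6\log_{d-1}n/\delta^2$ appearing in $\mathcal{N}_\delta$, and your observation that the slack (the extra core node, together with $d=\omega(\log n)$ making $\log_{d-1}n-\log_d n$ negligible) is what justifies this step is a more careful accounting of exactly the point the paper glosses over.
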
	
		For a proof see Appendix \ref{Appendix-2}.

		\subsection{Construction of Witness Graph}
			In this subsection,  we show how to construct a $c$-loaded $(\lambda, \mu)$-tree contained in $\mathcal{H}_{n_1}$. 
				Let $U_{n_1, l, h}$ denote the event that   after allocating  at most $n_1\le n$ balls by $\mathcal{A}(G, l)$ there is a node with load at least  $h \rho +c+1$, where $c=\Oh(1)$ and $h=\Oh(\log\log n)$ are  positive integers that  will be fixed later.
			Suppose that 
			event  $U_{n_1, l, h}$ conditioning on $\mathcal{N}_\delta$
			happens. Then there is an $l$-walk $R$, called root,  that corresponds  to the ball at height   $h \rho +c$ and has \mbox{$f(R)\ge h \rho +c $}.
			Applying Lemma \ref{lem:PB} shows  that  $PB(\rho, k)$  on $R$ is valid.
			So, let us define 
			\[
			\mathcal{L}_1:=\{W_P, P\in \mathcal{P}_k(R)\},
			\] which is called the first level  and $R$ is the father of all $l$-walks in $\mathcal{L}_1$.
			(C2) in the {\sf Partition-Branch} procedure  ensures that  for every $W\in \mathcal{L}_1$, 
			\[
			f(W)\ge (h-1)\rho+c.
			\] 
			
			\begin{figure}[t]
				\vspace{0cm}
				\centering
				\includegraphics[width=1\textwidth]{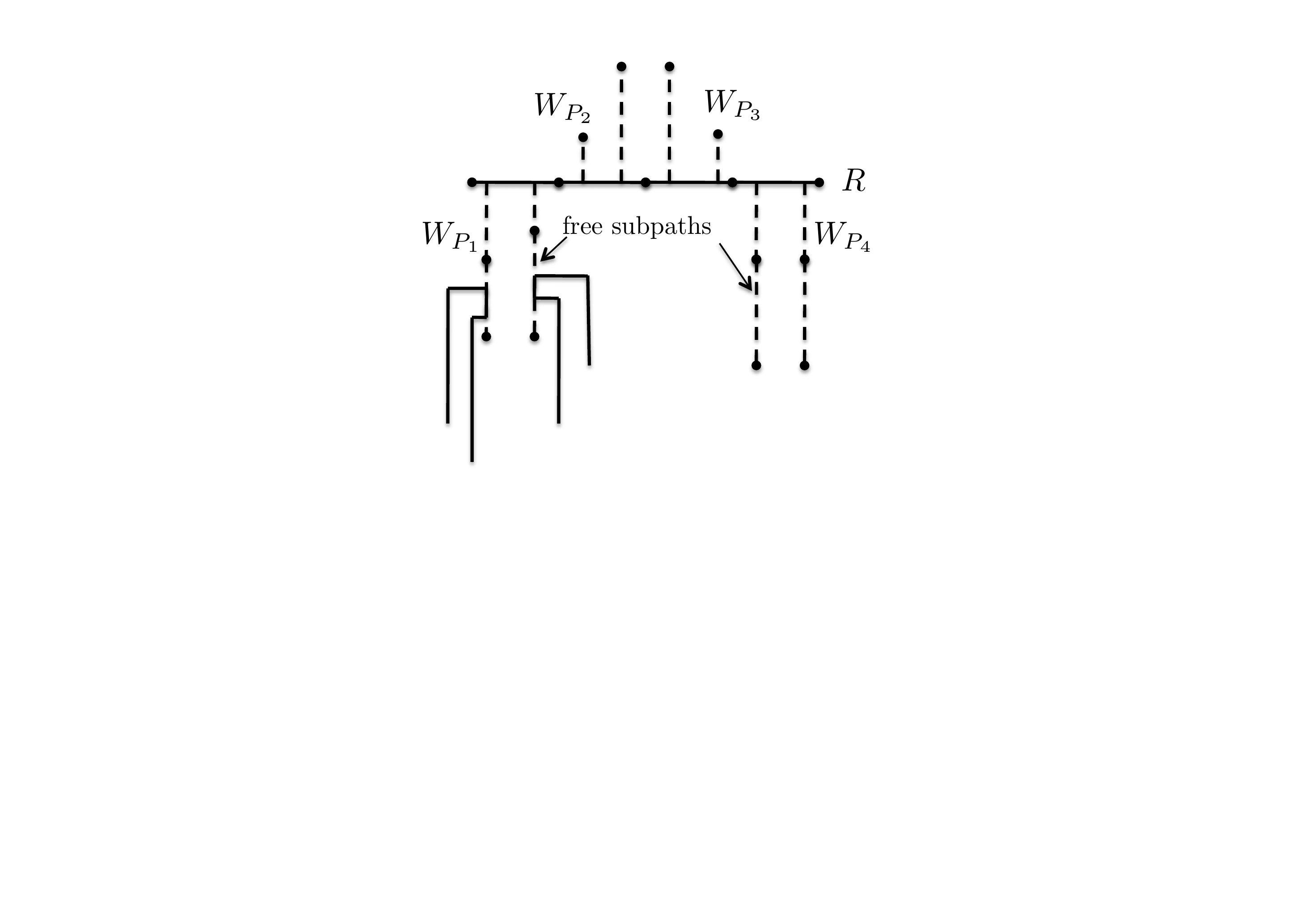}
				\vspace{-4.5cm}
				\caption{\small The first level
					$\mathcal{L}_	1=\{W_{P_1}, W_{P_{2}}, W_{P_3}, W_{P_4}\}$ and  the {\sf Branch} step for free subpaths of $\mathcal{P}_k(W_{P_1})$.\label{fig2}}
				
			\end{figure}
			Once we have the first level we recursively build the $i$-th level from the $(i-1)$-th level, for every $2\le i \le h$.
			We know that each $W$ except $R$ is created by the {\sf Branch} step on its father.
			Let us fix  $W\in \mathcal{L}_{i-1}$ and its father $W'$. We then apply the {\sf Partition} step  on  $W$  and get $\mathcal{P}_k(W)$.
			We say  $P\in \mathcal{P}_k(W)$ is a {\it free} subpath  if it does not share any node with $W'$. By (C1),  we have that  $\emptyset\neq W\cap W'=[u, v]\subset P'$, for some $P'\in \mathcal{P}_k(W')$ and hence
			$d(u, v)\le \lceil l/k\rceil$.  So,  $[u, v]$
			shares node(s) with   at most  $2$  subpaths in
			$\mathcal{P}_k(W)$ and  thus $\mathcal{P}_k(W)$ contains at least $k-2$ free subpaths.
			Let  $\mathcal{P}^0_k(W)\subset \mathcal{P}_k(W)$ denote
			an arbitrary  set of free subpaths of size $k-2$.  By (C2) and the recursive construction,  we have that  $f(W)\ge (h-i+1)\rho +c$, 
			for each $W\in\mathcal{L}_{i-1}$. Therefore, by Lemma    \ref{lem:PB}, $PB(\rho, k)$ on $W$ is valid. 
			Now we define the $i$-th level as follows,
			
			\begin{align*}
			\mathcal{L}_i=\bigcup_{W\in \mathcal{L}_{i-1}}\{W_P, P\in \mathcal{P}^0_k(W)\}.
			\end{align*}
			For a graphical view see Figure \ref{fig2}. The following lemma guarantees that our construction gives a $c$-loaded $(\lambda, \mu)$-tree in $\mathcal{H}_{n_1}$ with desired parameters (for a proof see Appendix \ref{app:const}).

			\begin{lem}\label{lem:witness}
				Suppose that $G$ has girth at least $10hl$ and $U_{ n_1, l, h}$ conditioning on $\mathcal{N}_\delta$ happens.
				Then  there exists a $c$-loaded  $(\lambda, \mu)$-tree $T\subset \mathcal{H}_{n_1}$, where
				$\lambda=1+ k\sum_{j=0}^{h-1} (k-2)^j$ and $\mu=(l+1)\cdot k(k-2)^{h-1}$. \end{lem}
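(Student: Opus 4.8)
The plan is to establish the two stated parameters, namely the total number of tree vertices $\lambda$ and the lower bound $\mu$ on the size of the union of the visited node sets, by tracking the recursive construction level by level. First I would verify that the construction indeed produces a rooted tree in $\mathcal{I}(G,l)$: the root $R$ sits at level $0$, the first level $\mathcal{L}_1$ has exactly $k$ vertices (one $W_P$ per subpath $P\in\mathcal{P}_k(R)$), and for $i\ge 2$ each vertex of $\mathcal{L}_{i-1}$ spawns exactly $k-2$ children via its $k-2$ free subpaths. This immediately gives $|\mathcal{L}_1|=k$ and $|\mathcal{L}_i|=k(k-2)^{i-1}$ for $i\ge 1$, whence
\[
\lambda = |V(T)| = 1 + \sum_{i=1}^{h} |\mathcal{L}_i| = 1 + k\sum_{j=0}^{h-1}(k-2)^j,
\]
matching the claimed value. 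The tree structure (each non-root vertex has a unique father) follows because every $W_P$ is created by a {\sf Branch} step applied to a single father $l$-walk, so the father relation defines genuine tree edges with no cycles.

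Next I would confirm that all vertices lie in $\mathcal{H}_{n_1}$ and that $T$ is $c$-loaded. Since $U_{n_1,l,h}$ holds, the root corresponds to a ball actually allocated by $\mathcal{A}(G,l)$ within the first $n_1$ balls, so $R\in\mathcal{H}_{n_1}$; by condition (C2) and the recursive bound $f(W)\ge (h-i+1)\rho+c$ for $W\in\mathcal{L}_i$, together with validity of $PB(\rho,k)$ guaranteed by Lemma~\ref{lem:PB}, every $W_P$ produced is itself a genuinely chosen $l$-walk with $f(W_P)\ge c+1>0$, hence belongs to $\mathcal{H}_{n_1}$ and every node in $\bigcup_{W\in V(T)}W$ has load at least $c$. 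This is the $c$-loaded property. The bound $f(W)\ge(h-i+1)\rho+c$ at the last level $i=h$ gives $f(W)\ge\rho+c\ge\rho+1$, which is exactly what Lemma~\ref{lem:PB} needs to keep the branching valid down to level $h$, so the recursion does not break before reaching depth $h$.

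The main obstacle, and the real content of the lemma, is the lower bound $\mu=(l+1)\cdot k(k-2)^{h-1}$ on $|\bigcup_{W\in V(T)} W|$. My plan is to show that the $k(k-2)^{h-1}$ leaf $l$-walks at level $h$ contribute \emph{pairwise (almost) disjoint} fresh nodes, each $l$-walk having $l+1$ nodes, so that the union over just the leaves already has size at least $(l+1)\cdot k(k-2)^{h-1}$. The key structural facts to exploit are (C1), which forces $W_P\cap W\subseteq P\setminus\{\text{endpoints}\}$ so a child meets its father only in the interior of a single subpath, and the \emph{free subpath} selection, which ensures a child's intersection with its father avoids the subpath already used by the grandfather. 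The large girth hypothesis (girth $\ge 10hl$) is what I expect to do the heavy lifting: because any two $l$-walks intersect in at most a path, and the whole witness graph has ``radius'' at most $h$ in terms of concatenated $l$-walks, the total diameter of the explored region is $O(hl)$, comfortably below the girth, so no long-range coincidental intersections can collapse distinct leaves onto shared nodes. I would argue that the girth bound precludes any two leaves (or a leaf and a non-ancestor) from sharing a node except through the controlled father-child overlaps, and since each overlap removes only $O(\lceil l/k\rceil)$ nodes near one endpoint while a leaf contributes $l+1$ nodes, the fresh contribution per leaf remains $\Theta(l)$; establishing the clean constant $(l+1)$ per leaf is the delicate counting step where the girth lower bound must be invoked carefully to rule out all spurious intersections among the $\lambda$ walks simultaneously.

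Once both parameters are pinned down, the conclusion follows by assembling these observations: $T$ is a rooted tree with $\lambda$ vertices all lying in $\mathcal{H}_{n_1}$, every node covered has load $\ge c$, and the union of covered nodes has size $\ge\mu$, which is precisely the definition of a $c$-loaded $(\lambda,\mu)$-tree. I expect the girth-based disjointness argument for $\mu$ to be the crux; the count of $\lambda$ and the load propagation are bookkeeping that follow directly from the construction and from Lemma~\ref{lem:PB}.
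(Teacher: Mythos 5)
Your skeleton matches the paper's: count $\lambda$ level by level, take father--child intersections as the tree edges, propagate loads via (C2), and reduce the bound on $\mu$ to showing the $k(k-2)^{h-1}$ level-$h$ walks are pairwise disjoint. But the justification you offer for that disjointness --- ``the explored region has diameter $O(hl)$, comfortably below the girth, so no coincidental intersections can occur'' --- does not actually prove it, and this is precisely where the content of the lemma lies. What the girth-versus-diameter argument gives (the paper's Lemma~\ref{lem:obs}) is only that the graphical union $H_j$ of all walks up to each level is \emph{acyclic}, i.e.\ a tree in $G$. Acyclicity by itself does not prevent two leaf walks from sharing nodes: two paths inside a tree can overlap in a common subpath without creating any cycle. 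The missing idea (the paper's Lemma~\ref{lem:lastlevel}) is to combine acyclicity with the boundary exclusion in (C1): one maintains inductively the invariant that any path in $H_j$ joining interior nodes of two distinct (free) subpaths must pass through partition-boundary nodes; if two same-level walks intersected, or a walk met a non-father from an earlier level, their graphical union would supply a second, boundary-avoiding path between two such interior nodes (branch walks avoid all partition points by (C1)), contradicting uniqueness of paths in the tree $H_j$. Without this two-paths contradiction, the claim that spurious intersections cannot occur is unsupported.

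Two smaller points. First, your hedge that each leaf may contribute only $\Theta(l)$ ``fresh'' nodes because of overlap with its father, making the constant $(l+1)$ ``delicate,'' is a miscount of what is needed: $\mu$ is obtained from the union over the level-$h$ walks alone, so once those are pairwise disjoint each contributes exactly $l+1$ to that union --- nodes shared with fathers are still nodes of the leaves, and no further counting is required. Second, your formula $\lambda=1+k\sum_{j=0}^{h-1}(k-2)^j$ silently assumes all constructed walks are distinct vertices of $\mathcal{I}(G,l)$; distinctness across fathers and across levels is not automatic and is again exactly what the disjointness lemma delivers, so it cannot be dismissed as immediate bookkeeping.
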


	\section{Balanced Allocation on Dense Graphs }\label{main thm}
		In this section we show the upper bound for the maximum load attained by $\mathcal{A}(G, l)$ for $d$-regular graph with $d=\omega(\log n)$. Let us  
	recall the  set of parameters  for given $G$ and $l$ as follows,
			\begin{align*}
				k&:=\max\{4, \lfloor l/\sqrt{ \log_d n}\rfloor \},\\
				\delta&:=\lfloor\lfloor l/k\rfloor/4\rfloor,\\
				\rho&:= \lceil 8 \log_d n /\delta^2\rceil,
			\end{align*}
			and $U_{n_1, l, h}$ is the event that at the end of round $n_1$,  there is a nodes  with load at least  $h\rho+c+1$, where $c$ is a constant and
			\begin{align*}
			h&:=\left\lceil\frac{ \log\log n}{\log (k-2)}\right\rceil.
			\end{align*}
			Note that when $l=(\log n)^{\frac{1+\epsilon}{2}}$ with constant $\epsilon\in (0, 1)$, then \[
			k=\lfloor l/\sqrt{ \log_d n}\rfloor\ge  l/\sqrt{ \log_3 n}\ge (\log n)^{\epsilon/3}.
			\]  Thus, $h=\left\lceil\frac{ \log\log n}{\log (k-2)}\right\rceil$ is a constant. Therefore, in order to apply Lemma \ref{lem:witness} for this case, it is sufficient that $G$ has girth at least $10 hl$ or $\omega(l)$. 
			Also we have the following useful lemma whose proof appears  in Appendix \ref{Max-1}.
		
		\begin{lem}\label{lem:Max}
			With probability $1-o(1/n)$, $\mathcal{N}_\delta$ holds.
		\end{lem}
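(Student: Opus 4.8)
The plan is to fix a path $P$ of length $\delta$ and control $N_P$, the number of balls among the $n$ allocated whose $l$-walk contains $P$, and then take a union bound over all such paths. Since the walks $W_1,\dots,W_n$ chosen by $\mathcal{A}(G,l)$ are i.i.d., $N_P$ is binomial with parameters $n$ and $p=\Pr{P\subseteq W}$, the containment probability for a single random $l$-walk $W$. First I would bound $p$. Because the uniform distribution is stationary for a NBRW on a $d$-regular graph, the $j$-th visited node $u_j$ is uniform on $V(G)$ for every $j$, so the walk sits at a prescribed endpoint of $P$ at step $j$ with probability $1/n$; and because $G$ has girth $\omega(l)$ the $l$-walk is a simple path, so once the walk is at that endpoint each of the $\delta$ further steps needed to trace $P$ is one of the $d-1$ non-backtracking choices and is taken with probability at most $1/(d-1)$. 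Summing over the at most $l+1$ admissible starting positions and the two orientations of $P$ gives
\[
p=\Pr{P\subseteq W}\le \frac{2(l+1)}{n\,(d-1)^{\delta}},\qquad\text{hence}\qquad \Ex{N_P}\le \frac{2(l+1)}{(d-1)^{\delta}}.
\]
This is exactly the $\delta$-edge analogue of the single-vertex balanced condition of Lemma \ref{first}.

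Next I would apply the binomial upper-tail estimate $\Pr{N_P\ge s}\le(e\,\Ex{N_P}/s)^{s}$ with the threshold $s:=6\log_{d-1}n/\delta$ coming from the definition of $\mathcal{N}_\delta$, and union bound over all paths of length $\delta$. The number of such paths is at most $n(d-1)^{\delta}$, since on a graph of girth $\omega(l)$ every non-backtracking walk is a simple path. The reason for choosing base $d-1$ in the threshold is the clean cancellation $s\delta=6\log_{d-1}n$, i.e. $(d-1)^{s\delta}=n^{6}$. Combining,
\[
\Pr{\mathcal{N}_\delta\ \text{fails}}\le n(d-1)^{\delta}\cdot\Big(\tfrac{2e(l+1)}{s\,(d-1)^{\delta}}\Big)^{s}
= n^{-5}\,(d-1)^{\delta}\Big(\tfrac{2e(l+1)}{s}\Big)^{s}.
\]

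It then remains to show the trailing factor $(d-1)^{\delta}\big(2e(l+1)/s\big)^{s}$ is $n^{o(1)}$, which finishes the proof since $n^{-5+o(1)}=o(1/n)$. I would first verify $s\ge 2$: in both parameter regimes $\delta\le \gamma_G=\sqrt{\log_d n}$ while $s\ge 6\log_{d-1}n/\delta=\omega(1)$, using $l=o(\log_d n)$ and $\log_d n\to\infty$. For $(d-1)^{\delta}$, note that $l=\omega(1)$ together with $l=o(\log_d n)$ forces $\log_d n\to\infty$, i.e. $\log d=o(\log n)$ and $d=n^{o(1)}$; then $\delta\log(d-1)\le\sqrt{\log_d n}\,\log d=O(\sqrt{\log n\,\log d})=o(\log n)$, so $(d-1)^{\delta}=n^{o(1)}$. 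For the last factor I split into the two cases of the parameter set: when $l\ge 4\gamma_G$ (so $\delta=\Theta(\sqrt{\log_d n})$ and $s=\Theta(\sqrt{\log_d n})$), $2e(l+1)/s=o(\sqrt{\log_d n})$ and $s\log(2e(l+1)/s)=O(\sqrt{\log_d n}\,\log\log n)=o(\log n)$ using $d=\omega(\log n)$; when $\omega(1)\le l\le 4\gamma_G$ (so $k=4$ and $\delta=\Theta(l)$), one checks $2e(l+1)/s\le 1$ because $l^{2}=O(\log_d n)$, so that factor is at most $1$. In all cases the trailing factor is $n^{o(1)}$, giving $\Pr{\mathcal{N}_\delta\ \text{fails}}=o(1/n)$.

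The main obstacle is the first step: pinning down the single-walk containment probability $p$ cleanly, that is, justifying stationarity of the uniform law under the NBRW and invoking the girth hypothesis to guarantee the walk is simple so that each of the $\delta$ tracing steps contributes an independent factor $1/(d-1)$ with no wrap-around or self-intersection. The remainder is a binomial tail plus bookkeeping; the only care needed there is confirming $s\ge 2$ and that the exponent estimates are uniform across the two regimes of $(k,\delta)$, both of which follow from $l\in[\omega(1),o(\log_d n)]$ and $d=\omega(\log n)$.
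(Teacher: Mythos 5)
Your proposal is correct and takes essentially the same route as the paper's proof: fix a $\delta$-path, bound the probability that a single random $l$-walk contains it by $\Theta(l)/\bigl(n(d-1)^{\delta}\bigr)$, bound the probability that at least $6\log_{d-1}n/\delta$ of the $n$ independent walks contain it by a binomial-type union bound, and then union over the $n^{1+o(1)}$ paths of length $\delta$. The only cosmetic differences are that the paper obtains the containment probability by directly counting the $(l-\delta+1)(d-1)^{l-\delta}$ $l$-walks through a fixed $\delta$-path (rather than via the stationarity argument, which is in effect its Lemma \ref{first}), and it disposes of the $(2l)^{s}$ factor uniformly with the single inequality $\log_{d-1}(2l)\le\delta/2$ instead of your two-case analysis of $\bigl(2e(l+1)/s\bigr)^{s}$.
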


		\begin{thm} Suppose that $G$ is a $d$-regular graph with girth at least $10hl$ and $d=\omega(\log n)$. Then, 
			with high probability 
			the maximum load attained by $\mathcal{A}(G, l)$, denoted by $m^*$,  is bounded from above as follows:
			\begin{itemize}
				\item[I.] If $\omega(1)\le  l\le 4\gamma_G$, where $\gamma_G=\sqrt{\log _d n}$. Then we have
				\[
				m^*=\Oh\left(\frac{ \log_d n\cdot \log\log n}{l^2}\right).\]
				
				\item[II.] If  $l\ge 4\gamma_G$, then we have 
				\[
				m^*=\Oh\left(\frac{\log\log n}{\log (l/\gamma_G)}\right).
				\]
			\end{itemize}
		\end{thm}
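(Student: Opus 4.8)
The plan is to combine the witness-graph machinery (Lemmas~\ref{lem:PB}--\ref{lem:witness}) with the appearance-probability bound (Lemma~\ref{lem:tree}) and the uniformity of the algorithm (Lemma~\ref{lem:uniform}) to first bound the maximum load $m_1^*$ after $n_1$ balls, and then boost this to a bound on $m^*$ after all $n$ balls by a phase argument. First I would observe that by Lemma~\ref{lem:Max} the event $\mathcal{N}_\delta$ holds with probability $1-o(1/n)$, so throughout I may condition on $\mathcal{N}_\delta$ and pay only an additive $o(1/n)$ in the final failure probability. The core estimate is to bound $\Pr{U_{n_1,l,h}}$, the probability that after $n_1$ balls some node has load at least $h\rho+c+1$. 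By Lemma~\ref{lem:witness}, $U_{n_1,l,h}$ conditioned on $\mathcal{N}_\delta$ forces the existence of a $c$-loaded $(\lambda,\mu)$-tree in $\mathcal{H}_{n_1}$ with $\lambda=1+k\sum_{j=0}^{h-1}(k-2)^j$ and $\mu=(l+1)k(k-2)^{h-1}$. Plugging these $\lambda,\mu$ into Lemma~\ref{lem:tree} gives
\[
\Pr{U_{n_1,l,h}\mid \mathcal{N}_\delta}\le n\cdot\exp\bigl(4\lambda\log(l+1)-c\mu\bigr).
\]

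The heart of the argument is to check that with the stated choice of $h=\lceil\log\log n/\log(k-2)\rceil$ this exponent is strongly negative, i.e.\ that $c\mu$ dominates $4\lambda\log(l+1)+\log n$. The key structural point is that $\lambda$ and $\mu$ are both dominated by their top-level term: since the tree branches by a factor $k-2$ at each of $h$ levels, $\lambda=\Theta(k(k-2)^{h-1})$ and $\mu=\Theta\bigl((l+1)k(k-2)^{h-1}\bigr)$, so $\mu/\lambda=\Theta(l+1)$. Hence $c\mu$ exceeds $4\lambda\log(l+1)$ by a factor growing like $l$, which for $l=\omega(1)$ gives room to spare; one then checks that the choice of $h$ makes $(k-2)^{h-1}\ge \log n/(k-2)$ (roughly), so that $\mu=\Omega\bigl((l+1)\log n\bigr)$ and $c\mu$ also beats the additive $\log n$ for a suitable constant $c=\Oh(1)$. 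This yields $\Pr{U_{n_1,l,h}\mid\mathcal{N}_\delta}=o(1/n)$, so with high probability $m_1^*\le h\rho+c$ after $n_1$ balls. Here the uniformity Lemma~\ref{lem:uniform} is what legitimizes restricting attention to the first $n_1=\lfloor n/(6\mathrm{e}\alpha)\rfloor$ balls.

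To pass from $n_1=\Theta(n)$ balls to all $n$ balls, I would split the allocation into $n/n_1=\Oh(1)$ consecutive phases. The crucial observation is that the allocation process is monotone in the sense that the load profile at the start of a phase only helps (it never decreases a node's load), and that running the algorithm on any starting configuration for $n_1$ further balls adds at most $m_1^*$ to the maximum load with high probability --- this is exactly the bound just established, applied afresh in each phase. A union bound over the $\Oh(1)$ phases then gives $m^*\le (n/n_1)\,m_1^*=\Oh(h\rho)$ with high probability. Finally I would substitute the parameter values: in Case~I, where $\omega(1)\le l\le 4\gamma_G$, we have $k=4$, so $\delta=\Theta(l)$ and $h=\Theta(\log\log n)$, whence $h\rho=\Theta\bigl(\log_d n\cdot\log\log n/l^2\bigr)$; in Case~II, where $l\ge 4\gamma_G$, we have $k=\lfloor l/\gamma_G\rfloor$, so $\rho=\Oh(1)$ and $h=\Theta\bigl(\log\log n/\log(l/\gamma_G)\bigr)$, giving $m^*=\Oh\bigl(\log\log n/\log(l/\gamma_G)\bigr)$.

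The main obstacle I anticipate is the exponent bookkeeping in the second paragraph: one must verify that the specific $h$ makes $c\mu-4\lambda\log(l+1)-\log n<0$ uniformly across the whole range $l\in[\omega(1),o(\log_d n)]$, and in particular that the geometric-sum bounds on $\lambda$ and the lower bound $(k-2)^{h-1}=\Omega(\log n/(k-2))$ are tight enough in the boundary regime $k=4$ (Case~I), where $k-2=2$ is a fixed small constant and the branching is slow. The rest --- the phase argument and the final parameter substitution --- is routine once this inequality is nailed down.
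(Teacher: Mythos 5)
Your analysis of the first phase (the first $n_1$ balls) is essentially the paper's own proof and is correct: condition on $\mathcal{N}_\delta$ via Lemma \ref{lem:Max}, invoke Lemma \ref{lem:witness} to extract a $c$-loaded $(\lambda,\mu)$-tree from $U_{n_1,l,h}$, and kill its appearance probability with Lemma \ref{lem:tree}. Your exponent bookkeeping via the ratio $\mu/\lambda=\Theta(l+1)$ together with $(k-2)^h\ge \log n$ is sound (and in fact slightly cleaner than the paper's case split $k=4$ versus $k=\lfloor l/\gamma_G\rfloor$), and your final parameter substitutions in Cases I and II are right.

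The genuine gap is in your phase-boosting step. You assert that the process is ``monotone'' and that the first-phase bound can be ``applied afresh in each phase'' to \emph{any} starting configuration. Neither claim is available. First, greedy least-loaded allocation is not monotone in any obvious sense: adding initial load to some nodes deflects later balls onto other nodes, so a loaded start can \emph{increase} the load elsewhere relative to an empty start; this is exactly the kind of statement that requires a majorization or coupling proof, not an observation. Second, the bound you established is only for the empty initial configuration: the uniformity Lemma \ref{lem:uniform}, on which Lemma \ref{lem:tree} rests, is proved by a potential-function argument counting \emph{empty} nodes in each neighborhood, and it simply does not apply when the phase starts from an arbitrary (non-flat) load profile. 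The paper bridges this with an explicit domination coupling: it introduces a copy $G'$ in which every node starts with load exactly $m_1^*$ (a \emph{flat} profile, which is equivalent to an empty start after shifting, so the first-phase analysis does apply to it), couples ball $n_1+t$ on $G$ with ball $t$ on $G'$ using the same $l$-walks and a shared random labeling $\sigma_{t}$ to break ties consistently, and proves by induction the pointwise domination $X_u^{n_1+t}\le Y_u^{t}$. The tie-breaking labels are not cosmetic: in the induction step where $X_v^{n_1+t}=Y_v^t$, the two processes could otherwise send their balls to different nodes and destroy the domination. This coupling is the missing idea in your proposal; without it, the sentence ``this is exactly the bound just established, applied afresh in each phase'' is circular.
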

		Note that when $l=\Theta(\gamma_G)$, we get the maximum load $\Oh(\log\log n)$.
		\begin{proof}
			By Lemma \ref{lem:uniform} we have that $\mathcal{A}(G, l)$ is an $(\alpha, n_1)$-uniform, where $n_1=\lfloor n/(6\mathrm{e}\alpha)\rfloor$. 
			Let us divide the allocation process into $s$ phases, 
			where $s$ is  the smallest integer satisfying $sn_1\ge n$. We now focus on the maximum load attained 
			by $\mathcal{A}$
			after allocating $n_1$ balls in the first phase, which is  denoted by $m_1^*$. Let us assume that $U_{n_1, l, h}$ happens.
			Now, in order  to apply Lemma \ref{lem:witness}, we only need that $G$ has girth at least $10hl$. 
			By Lemma \ref{lem:witness}, if $U_{n_1, l, h}$  conditioning on $\mathcal{N}_\delta$ happens, then 
			there is a $c$-loaded $(\lambda, \mu)$-tree $T$ contained in $\mathcal{H}_{n_1}$,
			where
			$\lambda=1+ k\sum_{j=0}^{h-1} (k-2)^j$ and $\mu=(l+1)\cdot k(k-2)^{h-1}$. Thus,  we get
			\begin{align*}
			\Pr{U_{n_1, l, h} ~|~ \mathcal{N}_\delta}\Pr{\mathcal{N}_\delta}&\leq \Pr{T \text{ exists} ~| ~\mathcal{N}_\delta}\Pr{\mathcal{N}_\delta}\\
			&=\Pr{T \text{ exists and~} \mathcal{N}_\delta}\\
			&\leq  \Pr{T \text{ exists}}.
			\end{align*}
			Therefore using the law of total probability and the above inequality we have 
			\begin{align}
			\Pr{U_{n_1, l, h}}&=\Pr{U_{n_1, l, h} ~|~ \mathcal{N}_\delta}\Pr{\mathcal{N}_\delta}+ \Pr{U_{n_1, l, h} ~|~\neg \mathcal{N}_\delta}\Pr{\neg\mathcal{N}_\delta\notag}\\
			& \le  \Pr{T \text{ exists}}+ \Pr{\neg\mathcal{N}_\delta}\notag\\
			&=  \Pr{T \text{ exists}}+ o(1/n)\label{final}.
			\end{align}
			where the last inequality follows from  $\Pr{\neg\mathcal{N}_\delta}=o(1/n)$ by Lemma \ref{lem:Max}.
			By definition of $h$, we get
			\[
			\lambda=1+k(1+(k-2)^h)\le 2k\log n\]
			and \[\mu=(l+1)k(k-2)^{h-1}\ge (l+1)(k-2)^{h}\ge (l+1)\log n.\]
			It only remains to bound  $\Pr{ T ~\text{exists}}$. By   applying  Lemma \ref{lem:tree} and substituting  $\mu$ and $\lambda$, we conclude that 
			\[
			\Pr{T~\text{exists} }\le n\exp(4\lambda\log (l+1))-c\mu)\leq n\exp\{
			-z\log n\},
			\]		
			where $z=c(l+1)-8k\log (l+1)$. Depending on $k$ we consider two cases:
			First, $k=4$. Then it is easy to see there exists a  constant $c$ such that $z\ge 2$.
			Second, $k=\lfloor l/\gamma_G\rfloor$.
			We know that $l<\log_d n$, so we have   $l\le \gamma_G^2$ and hence,
			\[
			z\ge cl-8l\log l/\gamma_G\ge l(c -16\log \gamma_G /\gamma_G)=l(c-o(1)).
			\]

			This yields  that for some integer $c>0$, $z=l(c-o(1))>2$  and hence in both cases we get $\Pr{ T ~\text{exists}}=o(1/n)$. Now, by Inequality (\ref{final})
			we infer that    $m^*_1\le h\rho +c+1$ with probability $1-o(1/n)$. 
			In what follows we show the sub-additivity of the algorithm and concludes that  in the second phase the maximum load increases by at most $m_1^*$ whp.
			Assume that we have a copy of $G$, say $G'$,  whose nodes have load exactly $m_1^*$.
			Let us consider the allocation process of a pair of balls $(n_1+t, t)$, for every $0\le t\le n_1$, by $\mathcal{A}(G, l)$ and  $\mathcal{A}(G', l)$.
			Let $X_u^{n_1+t}$ and $Y_u^t$, $t\ge 0$ denote the load of $u\in V(G)=V(G')$ after allocating balls $n_1+t$ and $t$  by $\mathcal{A}(G, l)$ and $\mathcal{A}(G', l)$, respectively.
			Now we show that for every integer $0\le t\le n_1$ and $u\in V(G)$ we have that 
			\begin{align}\label{domin}
			X_u^{n_1+t}\le Y_u^{t}.
			\end{align}
			When $t=0$, clearly the inequality holds because $Y_u^0=m_1^*$.
			We couple the both allocation processes 
			$\mathcal{A}(G, l)$ and $\mathcal{A}(G',l)$ for a given pair of balls $(n_1+t, t)$, $t\ge0$, as follows.
			For every $1\le t\le n_1,$  
			the coupled process first 
			picks 
			a one-to-one labeling function $\sigma_t: V(G)\rightarrow \{1, 2,\ldots,n\}$ uniformly at random.
			(Note that $\sigma_t$ is also defined for $G'$ as $V(G)=V(G')$.) Then it 
			applies $\mathcal{A}(G, l)$ 
			and selects  $l$-walks $W_{n_1+t}$ and its copy, say  $W'_t$, in $G'$. After that,  balls $n_1+t$ and $t$ are allocated on  least loaded nodes of $W_{n_1+t}$ and $W'_t$, respectively, and  ties are broken in favor of nodes with minimum label.
			It is easily checked that the defined process is a coupling.
			Let us  assume that  Inequality (\ref{domin})  holds for every $t_0\le t$,  then we show it for $t+1$. Let  $v\in W_{n_1+t+1}$ and $v'\in W_{t+1}'$ denote the  nodes that are the destinations  of pair $(n_1+t+1, t+1)$. Now we consider two cases:
			\begin{itemize} 
				\item[1.] $X_v^{n_1+t}<Y_v^t$.  Then allocating  ball $n_1+t+1$ on $v$  implies that  
				\[
				X_v^{n_1+t}+1=X_v^{n_1+t+1}\le Y_v^{t}\le Y_v^{t+1}.
				\]
				So, Inequality (\ref{domin}) holds for $t+1$ and every $u\in V(G)$.
				\item[2.] $X_v^{n_1+t}=Y_v^t$.
				Since $W_{n_1+t+1}=W'_{t+1}$, $v\in W_{t+1}'$ and $v'\in W_{n_1+t+1}$.
				Also we know that $v$ and $v'$  are nodes   with minimum load contained in $W_{n+t+1}$ and $ W_{t+1}$,  So we have,
				\[
				X_{v}^{ n_1+t} \le X_{v'}^{ n_1+t} \leq Y^t_{v'}\le Y^t_{v}. 
				\]
				Since $Y_v^{t}=X_v^{n_1+t}$, we have 
				\[
				Y_{v'}^t=Y_v^t=X_{v}^{n_1+t}.
				\] If  $v\neq v'$ and  $\sigma_{t+1}(v')< \sigma_{t+1}(v)$, then it contradicts  the fact that ball $n_1+t+1$ is allocated on $v$. Similarly, if $\sigma_{t+1}(v')> \sigma_{t+1}(v)$, it contradicts  that ball $t$ is allocated on $v'$. So, we have $v=v'$ and 
				
				\[
				X_{v}^{n_1+t}+1=X_v^{n+t+1}=Y^t_v+1=Y_v^{t+1}.
				\]
				
			\end{itemize}
			So in both cases,  Inequality (\ref{domin}) holds for every $t\ge 0$.
			If we set \mbox{$t=n_1$}, then  the maximum load attained by $A(G', l)$ is at most $2m^*_1$ whp.
			Therefore, by Inequality (\ref{domin}), $2m^*_1$ is an upper bound for the maximum load attained by $\mathcal{A}(G, l)$ in the second phase as well. Similarly, we  apply the union bound and conclude that after allocating the balls in  $s$ phases, the maximum load $m^*$ is at most $sm_1^*$ with probability $1-o(s/n)=1-o(1/n)$.
		\end{proof}

			\paragraph{Acknowledgment. }{The author wants to thank Thomas Sauerwald for introducing the problem and several helpful discussions. }

\bibliographystyle{plain}
\bibliography{diss1}

\appendix

\section{Omitted Proofs of Section \ref{sec:defi}}\label{Appendix-1}
In this section we show some useful results about  interference graph $\mathcal{I}(G, l)$ and present the omitted proofs of Section \ref{sec:defi}. 

	\begin{lem}\label{int}
		Suppose that  $V(\mathcal{I})$ and
		$\Delta(\mathcal{I})$ denote the vertex set and  the maximum degree of  interference graph $\mathcal{I}(G, l)$, respectively. Then we have,
		\begin{itemize}
			\item[(i)] $|V(\mathcal{I})|=nd(d-1)^{l-1}/2$,
			\item[(ii)] $\Delta(\mathcal{I})\le (l+1)^2d(d-1)^{l-1}$.
		\end{itemize}
		
		Furthermore, the number of rooted $\lambda$-vertex trees contained in $\mathcal{I}$ is bounded by  
		$ 4^\lambda\cdot|V(\mathcal{I})|\cdot\Delta(\mathcal{I})^{\lambda-1}$.
	\end{lem}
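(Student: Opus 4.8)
The plan is to establish the three bounds independently, in each case reducing the question to an elementary count of non-backtracking walks and then exploiting that the large girth (girth $\ge \omega(l) > l$) forces every non-backtracking walk of length $l$ to trace a \emph{simple} path. Consequently an $l$-walk is exactly an unordered pair of endpoints joined by such a path, and each path of length $l$ is traversed by precisely two directed non-backtracking walks, one from each endpoint. This two-to-one correspondence is the bookkeeping device used throughout.

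For (i), I would count directed non-backtracking walks of length $l$ directly: there are $n$ choices for the starting node $u_0$, then $d$ choices for $u_1$, and for each later step exactly $d-1$ choices (every neighbour except the one just left), giving $n d (d-1)^{l-1}$ directed walks. Since girth $>l$ guarantees no vertex repeats along the walk, each undirected $l$-walk corresponds to exactly two of these, and dividing by $2$ yields $|V(\mathcal{I})| = n d (d-1)^{l-1}/2$. For (ii), fix a vertex $v \in V(G)$ and count the $l$-walks through $v$. I would place $v$ at position $j \in \{0,1,\ldots,l\}$ along the walk and split it into a backward segment (a non-backtracking walk of length $j$ from $v$) and a forward segment (a non-backtracking walk of length $l-j$ from $v$) whose first steps differ. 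A short case check (the two endpoints $j\in\{0,l\}$, and the interior $1\le j\le l-1$, where the non-backtracking constraint at $v$ costs exactly one factor of $d-1$) shows that each position contributes exactly $d(d-1)^{l-1}$ directed walks; summing over the $l+1$ positions and halving gives $(l+1)d(d-1)^{l-1}/2$ undirected $l$-walks through $v$. Since $W'$ is adjacent to $W$ iff it meets one of the $l+1$ nodes of $W$, a union bound over those nodes gives $\Delta(\mathcal{I}) \le (l+1)\cdot(l+1)d(d-1)^{l-1}/2 \le (l+1)^2 d(d-1)^{l-1}$.

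For the final claim I would use the standard encoding of a rooted subtree by a pair consisting of an abstract plane-tree shape together with an embedding into $\mathcal{I}$. The number of ordered rooted trees on $\lambda$ vertices is the Catalan number $C_{\lambda-1}\le 4^{\lambda-1}\le 4^{\lambda}$; given a shape, the root may be any of the $|V(\mathcal{I})|$ vertices, and processing the remaining $\lambda-1$ vertices in (say) DFS order, each must be a neighbour in $\mathcal{I}$ of its already-placed parent, contributing at most $\Delta(\mathcal{I})$ choices. Every rooted $\lambda$-vertex subtree arises from at least one such (shape, embedding) pair, so their number is at most $4^{\lambda}\cdot|V(\mathcal{I})|\cdot\Delta(\mathcal{I})^{\lambda-1}$.

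The main obstacle is the per-position count in (ii): one must verify that the non-backtracking constraint at the pivot $v$ and at the two endpoints conspires to make the count independent of $j$, and that the large girth is genuinely used to rule out vertex repetitions, so that the position of $v$ is unique and the directed-to-undirected correspondence is exactly two-to-one. The tree-counting step is routine once one checks that the encoding map is \emph{surjective} onto the set of rooted subtrees, which is all that an upper bound requires.
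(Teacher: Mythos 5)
Your proposal is correct and follows essentially the same route as the paper: counting directed non-backtracking walks via the locally tree-like structure for (i), a position-by-position count of walks through a fixed vertex plus a union bound over the $l+1$ nodes for (ii), and a shape-times-embedding encoding (with the $4^\lambda$ bound on rooted tree shapes) for the subtree count. Your treatment is in fact slightly more careful than the paper's — you make the two-to-one directed-to-undirected correspondence explicit, which gains an inconsequential factor of $2$ in (ii) — but the ideas are identical.
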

	\begin{proof}
		It is easy to see that in a graph with girth at least $\omega(l)$, the number of $l$-walks  is exactly $nd(d-1)^{l-1}/2$, (without ordering)
		which is the size of $V(\mathcal{I})$.   Since the graph locally looks like a $d$-ary tree,  the
		total number of $l$-walks including  $v$ as $j$-th visited  node  is at most
		\[d(d-1)^{j-2} (d-1)^{l-j-1}=d(d-1)^{l-1}.\] Index $j$ varies from $0$ to $l$,  so $v$ can be an element of  at most $(l+1)d(d-1)^{l-1}$ $l$-walks. Also, every $l$-walk contains $l+1$ elements and hence  every $l$-walk intersects at most $(l+1)^2d(d-1)^{l-1}$ other $l$-walks. Thus we get
		\begin{align*}
		\Delta(\mathcal{I})\leq  (l+1)^2\cdot d(d-1)^{l-1}.
		\end{align*}
		Let us now bound  the total number of rooted $\lambda$-vertex trees contained in  $\mathcal{I}$.
		It was shown  that the total number of different shape rooted trees on $\lambda$ vertices is $4^\lambda$ (For example see \cite{Knuth}); we say two rooted trees have different shapes if they are not isomorphic. 
		For any given shape, there are $|V(\mathcal{I})|$ ways to choose the root. As soon as the root is fixed, each vertex in the  first level can be chosen in at most $\Delta(\mathcal{I})$ many ways. By selecting   the vertices of the tree level by level we have that  each vertex except the root can be chosen in at most $\Delta(\mathcal{I})$ ways. So the total number of 
		rooted $\lambda$-vertex  trees in $\mathcal{I}$ is bounded by 
		\[
		4^\lambda\cdot |V(\mathcal{I})|\cdot \Delta(\mathcal{I})^{\lambda-1}. 
		\]
		
	\end{proof}

	\begin{cor}\label{cor:count}
		The size of family of $(\lambda, \mu)$-trees is bounded by
		$4^\lambda|V(\mathcal{I})|\Delta(I)^{\lambda-1}$.
	\end{cor}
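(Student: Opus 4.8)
The plan is to derive this as an immediate consequence of the final assertion of Lemma \ref{int}, by observing that every $(\lambda,\mu)$-tree is in particular a rooted $\lambda$-vertex tree contained in $\mathcal{I}(G,l)$. Indeed, unpacking the definition, a $(\lambda,\mu)$-tree is a rooted tree $T\subset\mathcal{I}(G,l)$ satisfying two conditions: (1) $|V(T)|=\lambda$, and (2) $\bigl|\bigcup_{W\in V(T)}W\bigr|\ge\mu$. Condition (1) says exactly that $T$ is a rooted tree on $\lambda$ vertices inside the interference graph, while condition (2) is an \emph{additional} constraint on how many nodes of $G$ the union of those $l$-walks must cover. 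Hence the family of $(\lambda,\mu)$-trees is a subfamily of the family of all rooted $\lambda$-vertex trees contained in $\mathcal{I}(G,l)$.

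First I would make precise that this containment gives an inequality on cardinalities: imposing the extra constraint (2) can only delete trees from the collection of rooted $\lambda$-vertex trees, never create new ones, so
\[
\#\{(\lambda,\mu)\text{-trees}\}\;\le\;\#\{\text{rooted }\lambda\text{-vertex trees in }\mathcal{I}(G,l)\}.
\]
Then I would simply invoke the last sentence of Lemma \ref{int}, which already bounds the right-hand side by $4^\lambda\cdot|V(\mathcal{I})|\cdot\Delta(\mathcal{I})^{\lambda-1}$. Combining the two displays yields the claimed bound $4^\lambda|V(\mathcal{I})|\Delta(\mathcal{I})^{\lambda-1}$, completing the argument.

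There is essentially no substantive obstacle here, since all the counting work has been front-loaded into Lemma \ref{int}; the corollary is purely a matter of recognizing that the defining conditions of a $(\lambda,\mu)$-tree specialize the objects already counted. The only point that merits a word of care is to confirm that the root structure is consistent between the two families: Lemma \ref{int} counts \emph{rooted} trees, and the definition of a $(\lambda,\mu)$-tree also refers to a rooted tree $T\subset\mathcal{I}(G,l)$, so the two notions of tree agree and the subfamily relation is genuine rather than merely an unrooted-graph comparison. Once that is noted, the bound transfers verbatim, and no new estimate on $|V(\mathcal{I})|$ or $\Delta(\mathcal{I})$ is needed beyond what Lemma \ref{int} supplies.
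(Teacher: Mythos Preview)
Your proposal is correct and follows essentially the same argument as the paper: both observe that a $(\lambda,\mu)$-tree is in particular a rooted $\lambda$-vertex subtree of $\mathcal{I}$, so the family of $(\lambda,\mu)$-trees is bounded in size by the family of all rooted $\lambda$-vertex trees in $\mathcal{I}$, and then invoke the final assertion of Lemma~\ref{int}.
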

	\begin{proof}
		We know that 
		every $(\lambda, \mu)$-tree $T$ is a rooted $\lambda$-vertex subtree of $\mathcal{I}$ with the additional property that  
		$|\cup_{W\in V(T)} W|\ge \mu$. This implies that the size of family of 
		rooted $\lambda$-vertex subtrees of $\mathcal{I}$ is an upper bound for the size of family of  $(\lambda, \mu)$-trees and hence  by applying  Lemma \ref{int}, we reach the upper bound 
		$4^\lambda |V(\mathcal{I})|\Delta(I)^{\lambda-1}$.
	\end{proof}

\subsection{Proof of the Key Lemma} \label{sec:uniform}
	In this subsection  we first present several  useful lemmas and then show the key lemma \ref{lem:uniform}. Before that let us define some notations.
	For every $S\subseteq V(G)$, $\Empty_t(S)$ denotes the number of empty nodes contained in $S$ after allocating $t$ balls. Let  $N(v)$ denote the set of neighbors of $v$.   
	Note that  
	to avoid a lengthy case analysis   we do not optimize the constants.
	\begin{lem}[Deviation bounds for moderate independency]\label{mod-chernoff}
		Let $X_1,\cdots, X_n$ be arbitrary binary random variables. Let
		$X_1^*, X_2^*, \cdots, X_n^*$
		be binary random variables that are mutually independent and
		such that for all $i$, $X_i^*$,
		is independent of $X_1, \cdots, X_{i-1}$. Assume that for all
		$i$ and all $x_1, . . . , x_{i-1} \in \{0, 1\}$,
		\[
		\Pr{X_i = 1|X_1= x_1,\cdots, X_{i-1}= x_{i-1}}\ge \Pr{X^*_i= 1}.
		\]
		Then for all $a\ge 0$, we have
		\[
		\Pr{ \sum_{i=1}^nX_i\le a}\le \Pr{\sum_{i=1}^n X^*_i\le a}
		\]
		and the latter term can be bounded by any deviation bound for independent
		random variables.
	\end{lem}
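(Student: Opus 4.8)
The plan is to prove the statement by exhibiting a monotone coupling of the two sequences on a common probability space, under which the dominated variables lie pointwise below the original ones; stochastic domination of the partial sums then follows immediately. The whole argument is driven by a single source of randomness, so there is no genuine computation to carry out.

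First I would introduce independent random variables $U_1,\dots,U_n$, each uniform on $[0,1]$. Writing $p_i=\Pr{X_i^*=1}$, I set $\widehat X_i^*=\mathbbm{1}[U_i\le p_i]$. Since the $U_i$ are jointly independent uniforms, the $\widehat X_i^*$ are mutually independent Bernoulli variables with the correct parameters, so $(\widehat X_1^*,\dots,\widehat X_n^*)$ has exactly the law of $(X_1^*,\dots,X_n^*)$. Next I would generate a copy of the original sequence sequentially from the \emph{same} uniforms: having produced $\widehat X_1=x_1,\dots,\widehat X_{i-1}=x_{i-1}$, let $q_i=q_i(x_1,\dots,x_{i-1})=\Pr{X_i=1\mid X_1=x_1,\dots,X_{i-1}=x_{i-1}}$ be the conditional probability prescribed by the joint law of the $X_j$, and put $\widehat X_i=\mathbbm{1}[U_i\le q_i]$. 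Because the $X_j$ are binary, their joint distribution is determined by these conditionals, and generating $\widehat X_i$ with exactly this conditional law (using $U_i$, which is independent of $U_1,\dots,U_{i-1}$ and hence of the already-built history) reproduces the joint law, so that $(\widehat X_1,\dots,\widehat X_n)$ has the distribution of $(X_1,\dots,X_n)$. This is the standard chain-rule construction.

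The domination is then built into the coupling. The hypothesis $q_i\ge p_i$ gives the inclusion $\{U_i\le p_i\}\subseteq\{U_i\le q_i\}$, hence $\widehat X_i\ge \widehat X_i^*$ for every $i$, surely. Summing over $i$ yields $\sum_i \widehat X_i\ge \sum_i \widehat X_i^*$ pointwise, so $\{\sum_i\widehat X_i\le a\}\subseteq\{\sum_i\widehat X_i^*\le a\}$ and therefore $\Pr{\sum_i\widehat X_i\le a}\le\Pr{\sum_i\widehat X_i^*\le a}$. Passing back to the original laws, which match the coupled sums in distribution, gives the claimed inequality for every $a\ge 0$. The closing remark is then automatic: since the $X_i^*$ are mutually independent, the right-hand probability may be estimated by any Chernoff-type tail bound for sums of independent Bernoulli variables.

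The only delicate point, and the one I would state carefully, is the simultaneous verification that the sequential construction reproduces the joint law of $(X_1,\dots,X_n)$ while keeping the coupled $\widehat X_i^*$ mutually independent. Both facts rest on the single observation that the $U_i$ are jointly independent, so $U_i$ is independent of the $\sigma$-field generated by $U_1,\dots,U_{i-1}$, equivalently by $\widehat X_1,\dots,\widehat X_{i-1}$. There is no computational obstacle here; the entire content of the lemma is in setting up this coupling correctly and reading off the pointwise inequality from $q_i\ge p_i$.
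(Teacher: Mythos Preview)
Your coupling argument is correct and is the standard proof of this stochastic-domination lemma. Note, however, that the paper does not actually prove this statement itself: it simply cites \cite[Lemma~1.18]{doerrbook} and moves on, so there is no in-paper argument to compare against. Your construction---generating both sequences from the same independent uniforms $U_1,\dots,U_n$ so that $\widehat X_i^*=\mathbbm{1}[U_i\le p_i]\le \mathbbm{1}[U_i\le q_i]=\widehat X_i$ pointwise---is precisely the standard approach and would be a fully adequate substitute for the citation.
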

	The proof of the above lemma 
	can be found in \cite[Lemma 1.18]{doerrbook}.
	\begin{lem}\label{first}
		Suppose that $\mathcal{A}(G, l)$ has allocated the balls until the \mbox{$(t+1)$-th} ball, for some $0\le t\le n$. Then, for every given $v\in V(G)$ we have
		\[
		\Pr{v\in W_{t+1}}=\sum_{i=0}^{l}\Pr{C_i}=(l+1)/n,
		\]
		where ${C}_i$, $0\leq i\leq l$, is  the event that for $ W_{t+1}=(u_0,u_1,\ldots, u_{l})$, we have $v=u_{i}$. Furthermore, for every $0\le i \le l$, we have 
		\[
		\Pr{C_i}=1/n.
		\]
	\end{lem}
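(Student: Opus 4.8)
The plan is to compute $\Pr{C_i}$ by a direct counting argument that exploits the fact that $W_{t+1}$ is sampled independently of the allocation history. First I would observe that the conditioning on the first $t$ balls being allocated is irrelevant to the \emph{distribution} of $W_{t+1}$: the walk is generated by choosing a uniformly random starting node $u_0$ and then running a fresh NBRW, a procedure that never inspects the current loads. Hence $W_{t+1}$ is distributed uniformly over \emph{ordered} $l$-walks, each ordered sequence $(u_0,\dots,u_l)$ having probability $\tfrac{1}{n}\cdot\tfrac{1}{d}\cdot\tfrac{1}{(d-1)^{l-1}}=\tfrac{1}{nd(d-1)^{l-1}}$ (a factor $1/n$ for the uniform start, $1/d$ for the first step, and $1/(d-1)$ for each of the remaining $l-1$ non-backtracking steps).

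Next I would argue that $C_0,\dots,C_l$ are pairwise disjoint, so that $\Pr{v\in W_{t+1}}=\sum_{i=0}^{l}\Pr{C_i}$. This is where the girth hypothesis enters: since $G$ has girth $\omega(l)>l$, any non-backtracking walk of length $l$ is automatically a simple path (a repeated vertex would force a cycle of length at most $l$), so $v$ occupies at most one position $u_i$, making the events $C_i$ mutually exclusive. The same observation guarantees that imposing only the \emph{local} constraint $u_{j+1}\neq u_{j-1}$ in the count below already yields distinct simple walks, with no hidden repetition to correct for.

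The heart of the argument is to count, for each fixed $0\le i\le l$, the ordered $l$-walks with $u_i=v$ and to show this number equals $d(d-1)^{l-1}$ \emph{independently of} $i$. I would split such a walk into its backward part $u_0,\dots,u_i=v$ and its forward part $v=u_i,\dots,u_l$. Because the reversal of a non-backtracking walk is again non-backtracking, the number of length-$i$ NBWs ending at $v$ equals the number starting at $v$, namely $1$ when $i=0$ and $d(d-1)^{i-1}$ when $i\ge 1$. For the forward part of length $l-i$, the only delicate point --- the step I expect to require the most care --- is the non-backtracking constraint \emph{at} $v$ itself: for an interior position $1\le i\le l-1$ the first forward step $u_{i+1}$ must avoid the predecessor $u_{i-1}$ already fixed by the backward part, leaving $d-1$ choices, after which each further step again offers $d-1$ choices, for a total of $(d-1)^{l-i}$. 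Multiplying gives $d(d-1)^{i-1}\cdot(d-1)^{l-i}=d(d-1)^{l-1}$; the boundary cases $i=0$ and $i=l$ give the identical value by the trivial forward (resp.\ reversed backward) computation. Since each ordered walk carries probability $\tfrac{1}{nd(d-1)^{l-1}}$, we obtain $\Pr{C_i}=d(d-1)^{l-1}\cdot\tfrac{1}{nd(d-1)^{l-1}}=\tfrac1n$, and summing over the $l+1$ disjoint events yields $\Pr{v\in W_{t+1}}=(l+1)/n$. The crux is therefore verifying that the per-position count does not depend on $i$, which hinges precisely on how the backtracking constraint at $v$ trades one factor of $d-1$ between the forward and backward halves.
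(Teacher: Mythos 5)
Your proposal is correct and follows essentially the same route as the paper's proof: count the ordered $l$-walks having $v$ at position $i$ (which the local tree structure makes $d(d-1)^{l-1}$, independent of $i$) and divide by the total number $nd(d-1)^{l-1}$ of equally likely ordered walks. Your write-up is somewhat more careful than the paper's — you justify the disjointness of the events $C_i$ via the girth condition and handle the boundary positions $i=0,l$ explicitly — but the underlying argument is identical.
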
	
	\begin{proof}
		Let us fix an arbitrary  $0\le i\le l$ and any $v\in V(G)$.
		Since $G$ has girth at least $\omega(l)$ and locally looks like a  $d$-regular tree, we can easily compute the number of $l$-walks visiting  $v$ in the  $i$-th step, that is 
		\[
		d(d-1)^{i-2}\times (d-1)^{l-(i-1)}=d(d-1)^{l-1}.
		\]
		On the other hand in each round, $\mathcal{A}(G, l)$~picks an $l$-walk  randomly from $nd(d-1)^{l-1}$ possible $l$-walks. Thus, we get \[
		\Pr{C_i}=\frac{d(d-1)^{l-1}}{nd(d-1)^{l-1}}=\frac{1}{n}\]
		and
		\begin{align*}
		\sum_{i=0}^{l}\Pr{C_i}=\sum_{i=0}^{l}\frac{1}{n}=\frac{l+1}{n}.
		\end{align*}
	\end{proof}		
	\begin{lem}\label{lem:Upper1}
		Suppose that  with probability $1-o(n^{-2})$, for every \mbox{$u\in V(G)$}, \[{\Empty_{t}(N(u))\geq {|N(u)|}/{2}=d/2}.\]
		Then for every $v\in V(G)$,
		\[
		\Pr{\text{ball ${t+1}$ is allocated on $v$ by $\mathcal{A}$  }} \leq \frac{\alpha}{n},
		\]
		where $\alpha$ is a constant.
	\end{lem}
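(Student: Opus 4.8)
The plan is to condition on the load configuration produced by the first $t$ balls and to exploit that the walk $W_{t+1}=(u_0,\dots,u_l)$ of ball $t+1$ is independent of that configuration. Write $\mathcal{E}$ for the event that $\Empty_t(N(u))\ge d/2$ for every $u\in V(G)$; by hypothesis $\Pr{\neg\mathcal{E}}=o(n^{-2})$. On $\neg\mathcal{E}$ I bound the allocation probability trivially by $\Pr{v\in W_{t+1}}=(l+1)/n$ from Lemma~\ref{first}, so its total contribution is at most $o(n^{-2})\cdot(l+1)/n=o(1/n)$ since $l=o(\log n)$. It therefore suffices to establish the bound $\Oh(1/n)$ conditioned on an arbitrary fixed load configuration for which $\mathcal{E}$ holds.

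Fix such a configuration and the target node $v$. Because $G$ has girth $\omega(l)$, every $l$-walk is a path with distinct vertices, so $v$ occupies at most one position; by Lemma~\ref{first} the events $C_i=\{v=u_i\}$ are disjoint with $\Pr{C_i}=1/n$. I will show that $\Pr{\text{ball }t+1\text{ on }v\mid C_i}=\Oh(1/l)$ for each $i$, so that summing over the $l+1$ positions yields $\Pr{\text{ball }t+1\text{ on }v}\le(l+1)\cdot\tfrac1n\cdot\Oh(1/l)=\Oh(1/n)$, which together with the $o(1/n)$ term above gives the claim with a constant $\alpha$.

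The core estimate is a lower-tail concentration bound on the number $Z$ of empty nodes visited by $W_{t+1}$. Conditioned on $C_i$, consider the longer of the two sub-walks emanating from $v$: the forward part $u_i,\dots,u_l$ or, using the reversibility of the NBRW on the regular graph $G$, the reversed backward part $u_i,\dots,u_0$; one of these is a non-backtracking walk from $v$ of length at least $l/2$. At each of its steps the walk enters a fresh vertex chosen uniformly among at least $d-1$ neighbours of the current vertex, of which at least $d/2-1$ are empty by $\mathcal{E}$; hence, conditioned on the past, each freshly visited vertex is empty with probability at least $(d/2-1)/(d-1)\ge 1/3$. Applying the moderate-independence deviation bound Lemma~\ref{mod-chernoff} to these indicator variables (dominating them below by i.i.d.\ Bernoulli$(1/3)$ variables) gives $\Pr{Z\le l/12\mid C_i}\le 2^{-\Omega(l)}$.

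It remains to convert this into the per-position bound. If $v$ is empty, the minimum load in $W_{t+1}$ is $0$ and ties are broken uniformly among the $Z\ge 1$ empty visited nodes, so $\Pr{\text{ball on }v\mid C_i}=\Ex{1/Z\mid C_i}\le\Pr{Z<l/12\mid C_i}+12/l\le 2^{-\Omega(l)}+12/l=\Oh(1/l)$, using $l=\omega(1)$. If $v$ is non-empty, allocation on $v$ forces that no visited node is empty, i.e.\ $Z=0$, an event of probability at most $2^{-\Omega(l)}=o(1/l)$ by the same bound; here the constraint $v\in W_{t+1}$ supplying the $(l+1)/n$ factor is what keeps this case negligible despite $l=o(\log n)$. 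In both cases the per-position probability is $\Oh(1/l)$, finishing the argument. The main obstacle is precisely this concentration step: the emptiness indicators along the walk are dependent, so I must combine Lemma~\ref{mod-chernoff} with the reversibility of the NBRW to guarantee a sub-walk of length $\ge l/2$ on which the conditional emptiness probabilities are uniformly bounded below by a constant.
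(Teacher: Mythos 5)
Your proof is correct and takes essentially the same route as the paper's: you condition on the position events $C_i$ of Lemma~\ref{first}, view the walk given $C_i$ as non-backtracking sub-walks emanating from $v$ whose freshly visited vertices are empty with probability at least $1/3$ under the hypothesis, apply Lemma~\ref{mod-chernoff} to get concentration for the number of empty visited nodes, and convert this into a per-position allocation bound of $\Oh(1/l)$ via uniform tie-breaking, summing against $\Pr{C_i}=1/n$. The only differences are cosmetic: you invoke a Chernoff-type tail where the paper uses Chebyshev (both are allowed by Lemma~\ref{mod-chernoff}), you work with the longer sub-walk via reversibility where the paper treats both sub-walks, and you make explicit the empty/non-empty case split that the paper folds into its event $\mathcal{F}_{t+1}$.
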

	\begin{proof}
		Let  $\mathcal{E}_{ t+1,v}$  be the event that ball $t+1$ is placed on a given node $v\in V(G)$ and $\mathcal{F}_{t+1}$  be the event that 
		at least ${l}/{10}$ of nodes in
		$W_{t+1}$ are empty. Let $\neg \mathcal{F}_{t+1}$ denote the negation  of  $ \mathcal{F}_{t+1}$. Using the law of total probability,  for every $v\in V(G)$ we have
		\begin{align*}
		\Pr{\mathcal{E}_{t+1,v}}&=\underbrace{\Pr{\mathcal{E}_{t+1,v} | v\notin W_{t+1} }\cdot \Pr{v\notin W_{t+1}}}_{=0 }\notag\\
		&+\underbrace{ \Pr{\mathcal{E}_{t+1,v} | v\in W_{t+1} ~\text{and}~ \mathcal{F}_{t+1} }\cdot\Pr{v\in W_{t+1} ~\text{and}~\mathcal{F}_{t+1} }}_{\leq(10/l)\Pr{v\in W_{t+1}}} \notag\\
		&+ \underbrace{\Pr{\mathcal{E}_{ t+1, u} | v\in W_{t+1} ~\text{and}~ \neg \mathcal{F}_{t+1} }}_{\le 1}\cdot \Pr{v\in W_{t+1} ~\text{and}~ \neg \mathcal{F}_{t+1}}\notag\\
		&\le \frac{10}{l}\cdot\Pr{v\in W_{t+1}}+  \Pr{v\in W_{t+1} ~\text{and}~ \neg \mathcal{F}_{t+1}}			\end{align*}
		where  
		the first summand follows since
		if $v\notin W_{t+1}$, then ball $t+1$ cannot be placed on $v$ and  the second one  follows because
		ties are broken uniformly at random. Now, by applying Lemma \ref{first} and 
		and  Bayse' rule we have,
		\begin{align}\label{main}
		\Pr{\mathcal{E}_{t+1,v}}&\le  \frac{10(l+1)}{ln} + 
		\sum_{i=0}^l \Pr{~ \neg \mathcal{F}_{t+1} \text{and }C_i }\notag\\
		&=  \frac{10 (l+1)}{ln} + 
		\sum_{i=0}^l \Pr{~ \neg \mathcal{F}_{t+1}~|~C_i }(1/n).
		\end{align}
		In what follows we will show that for every $i$,
		\[
		\Pr{ \neg \mathcal{F}_{t+1} | C_i}\le 12/l.
		\]
		Plugging  the above bound  in Inequity (\ref{main}) 
		yields that  
		for every $v\in V(G)$,
		\[
		\Pr{\mathcal{E}_{t+1, v}}\le \frac{22 (l+1)}{ln},
		\] 
		where $22(l+1)/l$ is indeed  a constant and hence the statement is proved.

		Conditioning on event $C_i$, we only know that node $v$ is the $i$-th visited node in $W_{t+1}$, for some $0\le i\le l$. 
		Clearly, $W_{t+1}$ can be viewed as the union of  two edge-disjoint NBRWs  of lengths
		$(i-1)$ and   $l-(i-1)$ started from $v$, namely
		$W_v^1$ and $W_v^2$.
		Without loss of generality, assume that $|V(W^1_{v})|=s\ge 2$ and let \[
		W^1_{v}=(v=u_1, u_2,\ldots, u_s),\]
		where $ d(v, u_j)<d(v, u_{j'})$ for every  $1<j<j'\leq s$. 
		Clearly,
		every $ u_j\in W^1_{v}$, $2\leq j\leq s$,  is randomly chosen from   a subset of $N({u_{j-1}})$,  say $S_j\subseteq N(u_{j-1}) $ (because we run a NBRW  from $u_{j-1}$ to reach $u_j$). If it happens that the NBRW has already traversed   edge  $\{w, u_{j-1}\}$,  for some node $w$, then the walk cannot take this edge  again  and hence   
		\[
		|S_j|\in\{d, d-1\}.
		\]
		Let us  define an indicator random variable $X_{u_j}$ for every $u_j$, $2\le j\le s$, which takes one whenever $u_j$ is empty and zero otherwise.
		Thus we have 
		\[
		\Pr{X_{u_j}=1}=\frac{\Empty(S_j)}{|S_j|}.
		\]

		Let $\mathcal{K}_j$, $2\le j\le s$,  denote the event that 
		the number of empty nodes of $N({u_{j-1}})$ is at least ${d}/{2}$.	By the assumption, we have $\Pr{\mathcal{K}_j}=1-o(n^{-2})$. So,  for every $ u_j$, $2\le j\le s$, we get

		
		
		\begin{align*}
		\Pr{X_{u_j}=1 }&=\Pr{X_{u_j}=1~|~ \mathcal{K}_j}\Pr{\mathcal{K}_j}+\Pr{X_{u_j}=1 ~|~ \neg \mathcal{K}_j}\Pr{\neg \mathcal{K}_j}\\
		&\ge 1/2((d-2)/(d-1))(1-o(n^{-2}))+o(n^{-2})\ge{1}/{3},
		\end{align*}
		where the first inequality follows from \[\frac{\Empty(S_j)}{|S_j|}\ge \min \left\{\frac{d/2}{d},\frac{d/2-1}{d-1}\right\}=\frac{d-2}{2(d-1)}\]
		Since the above lower bound is independent of any $X_{u_j}$, $2\le j'\le j$, we have that for every $2\le j\le s$,
		\[
		\Pr{X_{u_j}=1~|~ X_{u_1}=x_1, \cdots,X_{u_{j-1}}=x_{j-1}}\ge 1/3.
		\]
		
		A similar argument also works for the nodes visited by $W^2_{t+1}$ and  we get $\Pr{X_u=1}\ge 1/3$, for every $u\in W_{t+1}\setminus\{v\}$. 
		Let $Y=\sum_{u\in W_{t+1}\setminus\{v\}}X_u$  be the number of empty nodes in $W_{t+1}\setminus\{v\}$. Then, we have that  $\Ex{Y}\ge l/3$. Let $Y^*$ be the summation of $l$  independent Bernoulli random variables with success probability $1/3$. 
		By applying Lemma \ref{mod-chernoff}  we get,
		{ \begin{align*}
			\Pr{ \neg \mathcal{F}_{t+1} | C_i}&\le \Pr{Y< l/6}\\
			&\le \Pr{Y^*<\Ex{Y^*}/2} \le \Pr{|Y^*-\Ex{Y^*}|\ge \Ex{Y^*}/2}.
			\end{align*}}
		We know that  $\Var{Y^*}\leq \Ex{Y^*}$, so applying  Chebychev's bound results into
		\[
		\Pr{|Y^*-\Ex{Y^*}|\ge \Ex{Y^*}/2}\leq\frac{\Var{Y^*}}{(\Ex{Y^*}/2)^2}\le \frac{4}{\Ex{Y^*}}.
		\]
		Thus, we get 
		\begin{align*}
		\Pr{ \neg \mathcal{F}_{t+1} | C_i }\leq {4 }/{\Ex{Y^*}}\leq 12/l.
		\end{align*}
	
	\end{proof}
	
	In order to prove our key lemma, we apply a  potential function argument which is similar to \cite[Theorem 1.4 ]{BSS013}.
	
	\begin{proof}[Proof of  Key Lemma \ref{lem:uniform}]
		Let us define potential function 
		\[
		\Phi(t)=\sum_{u\in V(G)}\exp({ a_t(u)}),
		\]
		where $a_t(u)$ denotes  the number of nonempty nodes of $N(u)$ after allocating $t$ balls. 
		It is clear that $\Phi(0)=n$.
		Let us assume that after allocating $t$ balls we have  $\Phi(t)\le n\cdot e^{d/4}$.
		
		\[
		\mathrm{e}^{ a_t(u)} \le \Phi(t)\le \mathrm{e}^{\log n+ d/4}.
		\]
		
		Since $d=\omega(
		\log n)$, 
		we get \[
		a_t(u)\le \log n+ d/4 < {d}/{2}
		\] and consequently 
		$\Empty_{t}(N(u))\ge \frac{d}{2}$,  for every $u\in V(G)$.  Let us define   indicator random variable $I_{t+1}(u)$ for every $u\in V(G)$ as follows:
		\[
		I_{t+1}(u):= \left\{
		\begin{array}{l l}
		1 & \quad \text{if ball $t+1$ is placed on an empty node in $N(u)$, }\\
		0 & \quad \text{otherwise.}\\
		\end{array} \right.
		\]
		Applying Lemma \ref{lem:Upper1}  shows that  if   $\Empty_t(N(u))\ge \frac{d}{2}$, then for every $u\in V(G)$,  
		\[
		\Pr{I_{t+1}(u)=1}\leq \frac{\alpha\cdot \Empty_{{t}}(N(u))}{n}\leq \frac{\alpha \cdot d}{n},
		\]
		where $\alpha$ is a constant.
		So we get
		\begin{align*}
		&\Ex{\Phi({t+1})~|~ \Phi({t})\leq n\cdot e^{d/4} }\\
		&\leq \sum_{u\in V(G)} \left\{\Pr{I_{t+1}(u)=1}\cdot e^{a_t(u)+1}+ \Pr{I_{t+1}(u)=0}\cdot e^{a_t(u)}\right\}\\
		&\leq \sum_{u\in V(G)} \left(1+\frac{ \alpha\cdot\mathrm{e}\cdot d}{n}\right)\cdot e^{ a_t(u)} =\left(1+ \frac{\alpha\cdot\mathrm{e}\cdot d}{n}\right)\Phi(t).
		\end{align*}
		Let us define  $\Psi(t):=\min\{\Phi(t), n\cdot e^{\Delta/4}\}$.  By using above recursive inequality  we have that 
		\[
		\Ex{\Psi({t+1})}\le \left(1+ \frac{ \alpha\cdot \mathrm{e}\cdot  d}{n}\right)\Psi(t).
		\]
		Thus, inductively we have that  $\Ex{\Psi({t})}\le \left(1+ \frac{ \alpha\cdot \mathrm{e}\cdot d}{n}\right)^{t}\Psi(0) $.
		Let us define \mbox{$n_1={n}/({6\mathrm{e}\alpha})$. }Then applying Markov's inequality implies that
		\[
		\Pr{\Psi({n_1})\ge n\cdot e^{d/4} }\le   \frac{ \left(1+ \frac{\alpha\cdot \mathrm{e}\cdot d}{n}\right)^{n_1}}{e^{d/4}} \le e^{-d/12}
		\]
		So with  probability $1-n^{-\omega(1)}$, we have $\Phi(n_1)=\Psi(n_1)< n\cdot e^{d/4}$. Since $\Phi(t)$ is an increasing function in $t$,  we have that 
		$\Phi(t)\le n\cdot e^{d/4}$, for every $0\le t \le n_1 $, and hence with probability $1-o(n^{-2})$, for every $u\in V(G)$,
		\[
		\Empty_t(N(u))\ge {d}/{2}.
		\]
		So, applying   Lemma \ref{lem:Upper1} shows that 
		for every $0\le t\le n_1$ and $u\in V(G)$, 
		\[
		\Pr{\text{ball $t+1$ is placed on $u$ by $\mathcal{A}(G, l)$}}\le \frac{\alpha}{n}.
		\]
		
	\end{proof}

\subsection{Proof of Lemma \ref{lem:tree}}\label{sec:tree}

	\begin{proof}
		Let us fix an arbitrary  $(\lambda, \mu)$-tree  $T\subseteq\mathcal{I}(G, l)$ and  $p_1$ be the probability that  using $\lambda$ balls  $T$ is built and  contained in $\mathcal{H}_{n_1}$. 
		There are at most $n_1\le n$ ways to choose one ball per vertex of $T$ and hence at most  $n^\lambda$ ways to choose $\lambda$ balls that are going to pick the vertices of  $T$. On the other hand, every ball picks a given vertex of $T$   with probability $1/|V(\mathcal{I})|$.
		Thus we get, 
		\[
		p_1\le n^\lambda\cdot(1/V(\mathcal{I}))^\lambda.
		\]
		Now, we have to add $c$ additional balls for very node 
		in $\cup_{W\in V(T)}W$, where $|\cup_{W\in V(T)}W|=\mu+q$, for some integer $q\ge 0$. Let $p_2$ denote  the probability that  such a event happens.  Since   $\mathcal{A}(G, l)$~is $(\alpha, n_1)$-uniform with $n_1=\lfloor n/(6\mathrm{e}\alpha)  \rfloor$, we get 
		\begin{align*}
		p_2&\le \sum_{q=0}^{\infty}{n_1\choose c\cdot(\mu+q)} \left(\frac{\alpha\cdot (\mu+q)}{n}\right)^{c\cdot(\mu+q)}\\
		&\le\sum_{q=0}^{\infty}\left( \frac{\mathrm{e}\cdot n_1}{ c\cdot (\mu+q)}\right)^{c\cdot(\mu+q)}\cdot  \left(\frac{\alpha\cdot (\mu+q)}{n}\right)^{c\cdot(\mu+q)}\\ 
		&\le \sum_{q=0}^{\infty} \left(\frac{n_1\cdot\alpha\cdot\mathrm{e}}{n\cdot c}\right)^{c\cdot(\mu+q)}=
		(1/6c)^{c\mu}{\sum_{q=0}^\infty(1/6c)^{cq}}\\
		&\le 2\cdot (1/6c)^{c\mu},
		\end{align*}
		where we use the fact that for integers $1\le a \le b $, ${b\choose a}\le (\frac{\mathrm{e}b}{a})^a$ and the last inequality follows from $\sum_{q=0}^\infty(1/6c)^{cq}\le 2$.
		Since balls are mutually  independent,  $p_1\cdot p_2$ is  an upper bound for
		the probability that   $c$-loaded $(\lambda, \mu)$-tree $T$ appears in $\mathcal{H}_{n_1}$. 
		By Corollary \ref{cor:count} we have an upper bound for the size of family of all $(\lambda, \mu)$-trees. Hence, taking the union bound over all  $(\lambda, \mu)$-trees  gives an upper bound for appearance probability of  a $c$-loaded $(\lambda, \mu)$-tree in $\mathcal{H}_{n_1}$. Thus we get,
		
		\begin{align*}
		4^\lambda |V(\mathcal{I})|\cdot \Delta^{\lambda-1}\cdot p_1\cdot p_2&\leq 2\cdot 4^\lambda |V(\mathcal{I})|\cdot \Delta^{\lambda-1}\left(\frac{n}{V(\mathcal{I})}\right)^\lambda\cdot\left(\frac{1}{6c}\right)^{c\cdot\mu}\\
		&\leq 2n\cdot 4^\lambda \cdot \left(\frac{\Delta(\mathcal{I})}{V(\mathcal{I})}\right)^{\lambda-1}\cdot\left(\frac{1}{6\cdot c}\right)^{c\cdot\mu}.
		\end{align*}
		By Lemma \ref{int} we have  $|V(\mathcal{I})|=nd(d-1)^{l-1}/2$, $\Delta(\mathcal{I})\le (l+1)^2d(d-1)^{l-1}$. So the above bound  is simplified as follows, 
		\begin{align*}
		2n\cdot4^\lambda\left(2(l+1)^2\right)^{\lambda-1}\left(\frac{1}{6}\right)^{c\cdot\mu}\le n(l+1)^{4\lambda} 6^{-c\mu} \le n\exp({4\lambda\log(l+1)-c\mu}),
		\end{align*}
		where the first inequality follows from 
		
		\[
		2\cdot 4^\lambda \cdot 2^{\lambda-1}(l+1)^{2(\lambda-1)}=8^{\lambda}(l+1)^{2(\lambda-1)}\le (l+1)^{4\lambda},
		\]
		which is true for every  $l\ge2$.
	\end{proof}

 \section{Proof of Lemma \ref{lem:PB}}
 \label{Appendix-2}

 
 
	\begin{proof}
		Let us  fix an arbitrary subpath \mbox{$P_i=[u_i, u_{i+1}]\in\mathcal{P}_k(W)$} and partition it into $3$ edge-disjoint subpaths, say   $P_i=[u_i, u]\cup_g[u, v]\cup_g[v, u_{i+1}]$,  such that  
		
		\[
		d(u_i, u)=d(v,u_{i+1})=\delta,
		\]
		where
		\mbox{$\delta=\lfloor \lfloor l/k \rfloor/4\rfloor$}.
		By the {\sf Partition} step in $PB(\rho, k)$, we know that\[
		d(u_i, u_{i+1})\in\{\lfloor l/k\rfloor,\lceil l/k\rceil\}.\] 
		So we have  \[
		d(u, v)=d(u_i, u_{i+1})- 2\delta \ge 4\delta-2\delta=2\delta.
		\]
		Let \mbox{$S=W\cap V([u, v])$} and 	
		$B(S)$ denotes the set of all balls allocated on nodes of $S$ at  height at least \mbox{$f(W)-\rho$}.
		Let $W_t$ denote   the chosen $l$-walk by ball $t\in B(S)$. Since each ball $t\in B(S)$ was allocated on a node of $S$ at height at least $f(W)-\rho$, we have that 
		$W_t$ intersects $[u, v]\subset P_i$,   $f(W_t)\ge f(W)-\rho$ and 
		\mbox{$W_t\cap {W}\neq \emptyset$}. So each $W_t$ satisfies (C2).
		Now, among all $W_t$, $t\in B(S)$, we find an $l$-walk that satisfies (C1) as well.
		We have
		\[
		|S|\ge 2\delta,
		\] 
		Every node in $S$ has load at least $f(W)\ge \rho+1$. So, every node in $S$ has  at least $\rho$ balls at height at least  $f(W)-\rho\ge 1$. Therefor we have, 
		
		
		\begin{align*}
		|B(S)|\ge |S|\rho \ge(2\delta)\rho&\ge (2\delta)(6\log_{d-1} n /\delta^2)\\
		&=12\log_{d-1} n/\delta.
		\end{align*}
		By using the above inequality we have,
		\[
		|\{W_t, t\in B(S)\}|=|B(S)|\ge 12\log_{d-1} n/\delta.
		\]
		Recall that $P_i=[u_i, u]\cup[u,v]\cup[v, u_{i+1}]$. 
		If for some $t\in B(S)$, $W_t$ contains  $u_{i}$ (or $u_{i+1}$), then it also contains subpath $[u_i, u]$ (or $[v, u_{i+1}]$), because  $W_t$ intersects $[u, v]$ and  $G$ has girth $\omega(l)$. 
		Conditioning on $\mathcal{N}_\delta$,  $[u_i,u]$ and  $[v,u_{i+1}]$ are contained in  less than  $12\log_{d-1}n/\delta$ $l$-walks. 
		So  the above inequality shows that there is at least one  ball, say $t_0\in B(S)$,  whose corresponding $l$-walk $W_{t_0}$ contains neither  $u_i$ nor $u_{i+1}$ and thus it  satisfies $(C1)$.
		Therefore we conclude that,  
		for each $P_i\in \mathcal{P}_k(W)$,
		$W_{P_i}$ exists and  $PB(\rho, k)$ on $W$ is valid.
	\end{proof}

	\section{Proof of Lemma \ref{lem:witness}} \label{app:const}
	Before we present the proof of Lemma \ref{lem:witness}, we need to show  some  lemmas about the properties of the recursive construction of the witness tree.
	Suppose that  $H_j\subset G$,  $0\le j\le h$,
	be the graphical union of all  $l$-walks  up to $j+1$-the level (i.e., $\mathcal{L}_{j+1}$). Then we have the following lemma.
	\begin{lem}\label{lem:obs}
		If $G$ has girth at least $10hl$, then, for every $0\le j\le h$, $H_j$ is a tree.
	\end{lem}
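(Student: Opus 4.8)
The plan is to prove the statement by induction on $j$, using the standard characterization that a subgraph of $G$ is a tree exactly when it is connected and acyclic. The connectivity half is essentially free from the construction: apart from the root $R$, every $l$-walk in the witness graph is produced by the \textsf{Branch} step applied to a walk of the previous level (its father), and condition (C1) guarantees a nonempty intersection with that father. Moreover, by the Remark that the graphical intersection of any two $l$-walks in a graph of girth $\omega(l)$ is empty or a single path, each new walk $W_P$ meets its father in one connected path. Hence, once $H_{j-1}$ is connected, gluing on the walks of the next level (each attached to a walk already present) keeps $H_j$ connected; the base case is just that $R$ is a single path. All the real content therefore lies in proving acyclicity.

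For acyclicity I would combine the girth hypothesis with the fact that the partial witness graph has small diameter. First I would bound $\mathrm{diam}(H_j)$ inside $G$: any node of $H_j$ lies on some $l$-walk of level at most $j+1$, and by walking along that walk to its attachment path, then along its father, and so on down to $R$, one reaches a node of $R$ after traversing at most $j+1\le h+1$ walks, each of length $\le l$. Thus every node of $H_j$ is within $G$-distance $(h+1)l$ of $R$, giving $\mathrm{diam}(H_j)\le 2(h+1)l$. The decisive observation is then that any cycle contained in $H_j\subseteq G$ is a cycle of $G$, hence has length at least the girth $10hl$.

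The heart of the argument, and the step I expect to be the main obstacle, is to show that a newly attached walk $W$ meets the already-constructed graph \emph{only} in its father. Suppose to the contrary that $W$ also intersected some other walk $W''\subset H_{j-1}$. Since $H_{j-1}$ is a tree by the inductive hypothesis, there is a unique path inside $H_{j-1}$ joining the father-attachment point to the $W''$-attachment point, and its length is at most $\mathrm{diam}(H_{j-1})\le 2(h+1)l$. Closing this path up through a portion of $W$ (of length $\le l$) produces a cycle of length at most $l+2(h+1)l=(2h+3)l<10hl$ for $h\ge 1$, contradicting the girth bound. Hence no second intersection exists, each newly added walk hangs off the existing tree by a single path, and no cycle is created, so $H_j$ remains a tree. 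The same length estimate rules out intersections among distinct walks of the same level, so one may safely insert the level-$(j+1)$ walks one at a time; the only care needed is to verify that the running diameter stays $O(hl)$ throughout the insertion, which the generous factor $10$ in the girth hypothesis comfortably absorbs.
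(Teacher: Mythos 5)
Your argument is correct in substance, and it rests on exactly the tension the paper exploits---$H_j$ has diameter $\Oh(hl)$ inside $G$ while $G$ has girth $10hl$---but you deploy it quite differently. The paper's proof is a one-shot global argument: it inductively bounds $\mathrm{diam}(H_j)\le (2j+1)l$ (each new level adds at most $2l$ to the diameter, since every new walk meets $H_{j-1}$), and then simply observes that a connected graph containing a cycle contains one of length at most roughly twice its diameter, i.e.\ at most about $6hl<10hl$, contradicting the girth; no analysis whatsoever of \emph{how} the new walks attach is needed. You instead insert the walks one at a time and prove the stronger structural statement that each new walk meets the current graph only along its father, deriving tree-ness from that. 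This buys something real: your intermediate claim is essentially the content of Lemma \ref{lem:lastlevel}, which the paper proves separately and only \emph{after} (and using) the present lemma, so your route inverts that dependency at the cost of extra bookkeeping. Two repairs are needed to make your version airtight. First, ``closing this path up through a portion of $W$ produces a cycle'' is not automatic: the two attachment points could coincide, or the $W$-portion could run inside $H_{j-1}$, in which case no cycle arises. The clean formulation is that if $W\cap_g H_{j-1}$ were disconnected, then two intersection vertices that are consecutive along $W$ but lie in different components bound a $W$-segment internally disjoint from $H_{j-1}$, and this segment closed up by the tree path gives a genuine cycle of length at most $l+\mathrm{diam}(H_{j-1})<10hl$; connectedness of the graphical intersection (rather than literally ``meets only its father'') is also exactly what gluing needs in order to preserve tree-ness. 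Second, your diameter bound should be $(2h+3)l$ rather than $2(h+1)l$, since two nodes can each be within $(h+1)l$ of $R$ but near opposite ends of $R$, costing an extra $l$; this changes nothing downstream because $(2h+4)l<10hl$ for $h\ge 1$.
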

	\begin{proof}
		When $j=0$, clearly $H_0=R$, where $R$ is the root. So the diameter of $H_0$ is $l$. Assume that for some $j_0$, $0\le j_0< h$, the diameter of $H_{j_0}$ is at most $(2j_0+1)l$.  We know that
		every  $l$-walk in the $(j_0+1)$-th level intersects a path in $H_{j_0}$ so the distance between any two nodes of $H_{j_0+1}$ increases by at most $2l$ and thus the diameter of $H_{j_0+1}$ is at most 
		\[
		(2j_0+1)l+2l=(2(j_0+1)+1)l. 
		\]
		So we inductively  conclude   that,  for every $0\le j\le h$,   $H_j$  has diameter at most  $(2j+1)l$. If for some $j$, $0\le j\le h$, $H_j$ contains a cycle, then the length of the cycle is at most $2\cdot{\tt diam}(H_i)\le 2(2j+1)l\leq6hl$ which contradicts the fact that $H_j\subset G$ and $G$ has girth at least $10hl$.
	\end{proof}
	\begin{lem}\label{lem:lastlevel}
		For every $1\le j\le h$, the $j$-th level contains $k(k-2)^{j-1}$ disjoint  $l$-walks. Moreover every $l$-walk in the $j$-the level  only intersects one $l$-walk in the previous levels, which is its father. 
	\end{lem}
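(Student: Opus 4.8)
The plan is to prove the two assertions together by induction on the level index $j$, exploiting the fact (Lemma \ref{lem:obs}) that the graphical union $H_{h-1}$ of all $l$-walks produced by the construction is a tree, call it $T^\ast$. Since $G$ has girth at least $10hl$, every $l$-walk is a geodesic path in $T^\ast$, and this lets me use the nearest-point projection $\pi_Q$ onto any such path $Q$. The only geometric tool I need is the standard behaviour of $\pi_Q$ in a tree: the image $\pi_Q(S)$ of a connected set $S$ is a subpath of $Q$; if $S$ is itself a path meeting $Q$ then $\pi_Q(S)=S\cap Q$, and if $S$ is disjoint from $Q$ then $\pi_Q(S)$ is a single node, the unique point at which $S$ hangs off $Q$. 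Writing $\mathcal{F}_j=\mathcal{L}_0\cup\cdots\cup\mathcal{L}_j$ with $\mathcal{L}_0=\{R\}$, I would phrase the inductive statement $Q(j)$ as: \emph{two walks of $\mathcal{F}_j$ intersect if and only if one is the father of the other, and each child meets its father only in the interior of a free subpath.} Granting $Q(j)$, the lemma is immediate: two walks of the same level are never in a father--child relation, hence are disjoint (so in particular pairwise distinct, which yields the count $k(k-2)^{j-1}$ since each parent contributes its $k-2$ free subpaths, or $k$ subpaths when the parent is $R$), and the only earlier-level walk a level-$j$ walk can meet is its father.

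The substance is the inductive step. Assuming $Q(j-1)$, I add the walks of $\mathcal{L}_j$ and must show that the only new intersections are child--father ones. The key structural remark I would extract from $Q(j-1)$ is that, inside $\mathcal{F}_{j-1}$, a walk $W'\in\mathcal{L}_{j-1}$ is a \emph{leaf} of the father--child tree, because its children live in level $j$ and are not yet present. Hence $W'$ is attached to the rest of $\mathcal{F}_{j-1}$ only through its intersection $[u,v]$ with its own father, so every other walk $X\in\mathcal{F}_{j-1}$ satisfies $\pi_{W'}(X)\subseteq P'$, the single subpath of $W'$ containing $[u,v]$. Now take a child $W\in\mathcal{L}_j$ of $W'$; by (C1) it attaches in the interior of a free subpath $P$, and freeness forces $P\neq P'$. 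Projecting onto $W'$ gives $\pi_{W'}(W)\subseteq\mathrm{int}(P)$ while $\pi_{W'}(X)\subseteq P'$ for every other $X\in\mathcal{F}_{j-1}$; since $\mathrm{int}(P)\cap P'=\emptyset$, the walk $W$ can meet no such $X$. This clears all new-with-old intersections.

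It remains to compare two new walks $W_1,W_2\in\mathcal{L}_j$. If they share a father $W'$, I project onto $W'$ and use that they attach in interiors of \emph{distinct} free subpaths, whose interiors are disjoint, so $W_1\cap W_2=\emptyset$. If their fathers $W_a\neq W_b$ differ, I would argue by contradiction from a common point $x$: projecting onto $W_a$, I have $\pi_{W_a}(W_1)\subseteq\mathrm{int}(P_{W_1})$ with $P_{W_1}$ a free subpath of $W_a$, whereas $W_2\cup W_b$ is connected and disjoint from $W_a$ (using the already-proved new-with-old claim for $W_2$ and the same-level disjointness of $W_a,W_b$ from $Q(j-1)$), so it projects to a single node lying in the subpath of $W_a$ that meets $W_a$'s father; freeness again makes that subpath distinct from $P_{W_1}$, and the two interior-disjoint images cannot both contain $\pi_{W_a}(x)$.

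I expect the main obstacle to be the apparent circularity between ``same-level disjointness'' and ``meets only its father'': neither can be established in isolation, since both are facets of the same fact $Q(j)$. The resolution is exactly to carry the induction on the single statement $Q(j)$ about the full pattern of intersections, and to order the step so that new-with-old intersections are eliminated purely from $Q(j-1)$ \emph{before} same-level disjointness is addressed. Freeness of the chosen subpaths is invoked at precisely the point where it guarantees that the relevant projection images fall into distinct, interior-disjoint subpaths; this is the only place the {\sf Partition-Branch} rules enter, the rest being routine tree-projection bookkeeping. As a byproduct, disjointness of the last level $\mathcal{L}_h$ gives the $\mu=(l+1)k(k-2)^{h-1}$ of Lemma \ref{lem:witness} directly.
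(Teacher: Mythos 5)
Your proof is correct, and it shares the paper's overall skeleton --- an induction over the levels that leans entirely on Lemma \ref{lem:obs} (each $H_j$ is a tree) --- but the mechanism inside the induction is different enough to be worth comparing. The paper carries, alongside the lemma itself, a second invariant $\mathcal{D}_j$ (any path in $H_j$ between nodes of free subpaths of level-$j$ walks must pass through the subpaths' boundaries) and derives contradictions by exhibiting two distinct simple paths between the same pair of nodes of a tree. You instead fold everything into the single statement $Q(j)$ (intersections occur exactly between father and child, inside interiors of the designated subpaths) and recover the paper's invariant on the fly: since a top-level walk $W'$ is a leaf of the father--child tree, the union of the remaining walks of $\mathcal{F}_{j-1}$ is connected and meets $W'$ exactly in $W'\cap \mathrm{father}(W')$, so everything else projects into that intersection. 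This buys you lighter bookkeeping (one inductive invariant instead of two), and your ordering --- eliminate new-with-old intersections first, then reuse that to settle the different-fathers case of new-with-new --- cleanly breaks the circularity you point out; the paper resolves the same circularity by making $\mathcal{D}_{j}$ an explicitly carried hypothesis. One small inaccuracy to repair: you claim $\pi_{W'}(X)$ lies in ``the single subpath $P'$ of $W'$ containing $[u,v]$'', but $[u,v]=W'\cap\mathrm{father}(W')$ need not lie in a single subpath of $\mathcal{P}_k(W')$; it can straddle two consecutive subpaths, which is precisely why the construction only guarantees $k-2$ free subpaths per walk. The fix is immediate and strengthens nothing you need: your leaf argument actually yields $\pi_{W'}(X)\subseteq[u,v]$, and $[u,v]$ is node-disjoint from every free subpath by the definition of freeness, so the images you compare (an interior of a free subpath versus a subset of $[u,v]$) are still disjoint and the contradiction stands.
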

	\begin{proof}
		Let us begin with $j=1$. For the sake of  a contradiction, assume that $W_{P_i}, W_{P_{i'}}\in \mathcal{L}_{1}$ intersect  each other. Recall that the $l$-walks   are created by the {\sf Branch} step over the edge-disjoint paths, say   $P_{i}=[u_{i_1}, u_{i+1}]$ and 
		$P_{i'}=[u_{i'}, u_{i'+1}] \in \mathcal{P}_k(R)$. Clearly,  
		$W_{P_i}\cup_g W_{P_{i'}}$ is a connected graph as they intersect each other.
		So, by  Condition (C1), in the {\sf Partition-Branch} procedure,  we    choose two arbitrary nodes $z\in V(P_{i})\cap W_{P_{i}}$
		and $z'\in V(P_{i'})\cap W_{P_{i'}}$.
		Also,  let $\{u_{i}, u_{i+1}\}$ and $\{u_{i'}, u_{i'+1}\}$  be the {\it boundary} of $P_i$ and $P_{i'}$, respectively.
		Since $H_0$ is a tree, there is a unique path,  say $Q_{z, z'}$, in $H_0=R$ connecting $z$ to $z'$. Nodes $z$ and $z'$ have degree  $2$ in $H_0$, so $Q_{z, z'}$ contains  nodes from  boundaries of $P_i$ and $P_{i'}$.
		By (C1),   $W_{P_{i}}$ and $W_{P_{i'}}$ excludes  the boundaries. Thus we get a path from $z$ to $z'$ via $W_{P_i}\cup_g W_{P_{i'}}\subset {H_1}$ that excludes the boundaries. 
		This contradicts  the fact that there is a unique path in $H_1\supset H_0$, because $H_1$ is a tree by Lemma
		\ref{lem:obs}.
		So we infer that there are $k$ disjoint $l$-walks  in $\mathcal{L}_1$
		and they only intersect their father (i.e., $R$). 
		Recall that   $P\in \mathcal{P}_k(W)$ is a {\it free} subpath  if it does not share any node with $W$'s father. Since $W$'s, $W\in \mathcal{L}_1$, are mutually disjoint,   the nodes contained in the set of free subpaths, $\mathcal{P}_k^0(W)$, for each $W \in \mathcal{L}_1$, have degree at most  $2$ in $H_1$,
		which we call it   $\mathcal{D}_{1}$ property. In other word, $\mathcal{D}_{1}$ property says that any path in $H_{1}$ between   nodes of two free subpaths in the first level  includes nodes from boundaries of the subpaths (see Figure \ref{fig2}).
		
		Suppose  that for some $j_0$, $1\le j_0\le h$, the  statement of the lemma
		and $\mathcal{D}_{j_0}$  hold. Then we show them  for the next level as well. 
		
		Similar to case $j=1$, toward a contradiction assume that two $l$-walks $W_P, W_{P'} \in \mathcal{L}_{j_0+1}$ intersect each other. Then,  by $(C2)$ we get a path in $W_P\cup_g W_{P'}\subset H_{j_0+1}$ excluding the boundaries of 
		$P$ and $P'$ that connects  a node, say  $x$, from $P$ to another node, say $y$, in $P'$.
		By $\mathcal{D}_{j_0}$ property, the path in $H_{j_0}$ connecting $x$ to $y$  uses nodes from the boundaries, while we get a path in $H_{j_0+1}$ that exclude boundaries.
		This is a contradiction because $H_{j_0+1}\supset H_{j_0}$ is a tree
		by Lemma \ref{lem:obs}.
		So the $l$-walls in $\mathcal{L}_{j_0+1}$ are disjoint and by the construction we have $|\mathcal{L}_{j_0+1}|=(k-2)|\mathcal{L}_{j_0}|$ and hence $|\mathcal{L}_{j_0+1}|=k(k-2)^{j_0}$.
		It only remains to prove every $l$-walk only intersect its father in previous levels.
		Toward a contradiction  assume that   $W_P\in \mathcal{L}_{j_0+1}$ intersects a path, say $W$, in previous levels that is not  its father. 
		Let 
		$z'\in W_P\cap W$ and 
		$z\in W_P\cap V(P)\subset V(P)$ where 
		$P=[u, v]\in \mathcal{P}_k^0(W')$ and $W'$ is the father of $W_P$. By $(C2)$, $z$ is neither $u$ nor $v$.
		We now get a new path from  $z$ to $z'$ in $H_{j_0+1}$  excluding $u$ and $v$ (via $W_p\cup_g W$) that  contradicts the fact that there is only one path from $z$ to $z'$ in $H_{j_0}$ including a node from the boundary of $P$, as  $\mathcal{D}_{j_0}$ property holds.
		We showed that every two $l$-walks in $\mathcal{L}_{j_0+1}$ are disjoint, so the $\mathcal{D}_{j_0+1}$ holds as well.
	\end{proof}

	\begin{proof}[Proof of Lemma \ref{lem:witness}]
		Let us consider a graph $T$ whose nodes are the set of all  $l$-walks in $\bigcup _{j=0}^h \mathcal{L}_j$, where $\mathcal{L}_0=\{R\}$. And  two nodes are connected, if and only if  the corresponding  $l$-walks intersect each other or vice versa. By Lemma  \ref{lem:lastlevel}  for every $1\le j\le h$, the $j$-th,   level contains $k(k-2)^{j-1}$ disjoint $l$-walks and they  intersect either their fathers or their $k-2$ children.
		This implies that $T$ is a subtree of interference graph $\mathcal{H}_{n_1}$ with 
		\[
		|V(T)|=\lambda=1+ k\sum_{j=0}^{h-1} (k-2)^j.
		\]
		If we only consider the $h$-th level, then we get 
		\[
		\left|\cup_{W\in V(T)}W\right|\ge \mu=(l+1)\cdot k(k-2)^{h-1}.
		\]
		By  (C2) in the {\sf Partition-Branch}  procedure we have  that 	for every $W\in \mathcal{L}_j$, $1\le j\le h$,
		
		\[
		f(W)\ge (h-j)\rho+c.
		\] 
		Hence every node in $\cup_{W\in V(T)}W$ has load at least $c$.
	\end{proof}
	
\section { Proof of Lemma \ref{lem:Max}}\label{Max-1}
\begin{proof}
	Let us fix an arbitrary path $[u, v]$ of length $\delta$, where \[
	\delta=\lfloor\lfloor l/k\rfloor/4\rfloor=
	\min\{\lfloor l/16\rfloor,\lfloor \sqrt{\log_d n}/4\rfloor\}.
	\]
	The latter equality is true  because we set 	$k=\max\{4, \lfloor l/\sqrt{\log_d n}\rfloor \}$.
	Clearly, if $W$ be an $l$-walk and $[u, v]\subseteq W=[u_0, u_l]$, then
	\[
	d(u_0, u)+d(v, u_l)=l-\delta.
	\]
	Moreover,  $G$ is  a $d$-regular graph  with  girth at least $\omega(l)$, so 
	the total number of different paths of length $l$ containing $[u,v]$ is 
	\[
	\sum_{a+b=l-\delta}{(d-1)^{a}(d-1)^{b}}=(l-\delta+1)\cdot (d-1)^{l-\delta}.
	\] On the other hand the total number of different paths of length $l$ is 
	$n\cdot d\cdot (d-1)^{l-1}/2$. So the probability that in some round $t$, $1\le t\le n$, we get  $[u,v]\subseteq{W}_t$ is at most 
	\[
	\frac{2(l-\delta+1)(d-1)^{l-\delta}}{n\cdot d\cdot(d-1)^{l-1}}= \frac{2(l-\delta+1)(d-1)}{n \cdot d\cdot (d-1)^{\delta}} \le \frac{2l}{n(d-1)^{\delta}}.
	\]
	Let $u_\delta=\lceil6\log_{d-1} n/\delta\rceil$ and $\{t_1, t_2,
	\ldots, t_{u_\delta}\}\subset[n]$ be  a sequence of distinct rounds  of size $u_\delta$. We define indicator random variable  
	$X_{t_1, t_2,\ldots, t_{u_\delta}}([u, v])$, which
	takes one if 
	$[u, v]\subseteq W_{t_i}$, for every $1\le i\le u_\delta$, and zero otherwise. Thus we get
	\begin{align*}
	\Pr{X_{t_1, t_2
			\ldots, t_{u_\delta}}([u, v])=1}&\leq\left({2l}/{n(d-1)^{\delta}}\right)^{u_\delta}\\
	&= n^{-u_\delta} \left({d-1}\right)^{(\log_{d-1}(2l)-\delta)u_\delta}\\
	& \le n^{-u_\delta}{{(d-1)}}^{-{u_\delta\cdot \delta}/2}= n^{-u_\delta}n^{-3},
	\end{align*}
	where the last inequality follows from  $l\in [\omega(1), o(\log n)]$ and hence,  
	\[
	\log_{d-1} (2l)\leq \delta/2.
	\]
	
	There are at most $n^{u_\delta}$  sequences of rounds of size $u_\delta$ and at most \mbox{$n(d-1)^{\delta-1}$}
	paths of length $\delta$.
	Thus,  by using the previous upper bound and the union bound over all sequences of rounds and paths of length $\delta$ we have
	
	\begin{align*}
	&\sum_{\delta\text{-path}}\sum_{t_1, t_2,\ldots, t_{u_\delta}}\Pr{X_{t_1, t_2
			\ldots, t_{u_\delta}}([u, v])=1}\\
	&\leq n d (d-1)^{\delta-1}n^{u_\delta}
	\Pr{X_{t_1, t_2\ldots, t_{u_\delta}}([u, v])=1}\\
	&\le o(n^{2})n^{u_\delta} \Pr{X_{t_1, t_2\ldots, t_{u_\delta}}([u, v])=1}=o(1/n),
	\end{align*}
	where the last inequality  follows from 
	$\delta\le l=o(\log_{d}n)$.
	This implies that
	with probability $1-o(1/n)$
	there  is no path of length $\delta$ contained in at least $u_\delta$ $l$-walks or equivalently 
	$\mathcal{N}_\delta$ holds.
\end{proof}

	\section{Balanced Allocation on Sparse Graphs}\label{sec:sparse}
	In this section we present allocation algorithm $\mathcal{A}'(G, l)$ 
	for $d$-regular graphs, with $d\in[3, \Oh(\log n)]$.
	The algorithm proceeds as follows:
	In each round, every ball picks a node uniformly at random and it takes a NBRW of length $l\cdot r_G$ from the chosen node, where $r_G=\lceil 2\log_{d-1}{\log n}\rceil$. After that the ball collects the load information every $r_G$-th visited node, called the {\it potential choice}, and place itself on a least-loaded potential choice (ties are broken randomly).
	We now present  the following theorem for the maximum load attained by  $\mathcal{A}'(G, l)$.
	
	\begin{thm}\label{thm:sparse}
		Suppose that $G$ is a $d$-regular graph with girth at least $10l(\log \log n)^2$ and $d\in[3, \Oh(\log n)]$. Then, 	with high probability the maximum load attained by $\mathcal{A}'(G, l)$, denoted by $m^*$,  is bounded from above as follows:
		\begin{itemize}
			\item[I.] If $\omega(1)\le  l\le 4\gamma'_G$, where $\gamma'_G=\sqrt{\log _d n/r_G}$, then we have
			\[
			m^*=\Oh\left(\frac{ \log_d n\cdot \log\log n}{r_Gl^2}\right).\]
			
			\item[II.] If  $l\ge 4\gamma'_G$, then we have 
			\[
			m^*=\Oh\left(\frac{\log\log n}{\log (l/\gamma_G)}\right).
			\]
		\end{itemize}
		
	\end{thm}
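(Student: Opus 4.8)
The plan is to reduce the sparse case to the dense case already settled by the main theorem of Section~\ref{main thm}. The crucial observation is that $r_G=\lceil 2\log_{d-1}\log n\rceil$ is chosen precisely so that $(d-1)^{r_G}\ge (\log n)^2=\omega(\log n)$. Hence, although each individual node has only $d=\Oh(\log n)$ neighbours, the set of endpoints of $r_G$-step non-backtracking walks leaving a fixed node has size $\Theta((d-1)^{r_G})=\omega(\log n)$. The $l+1$ potential choices $\{u_{jr_G}:0\le j\le l\}$ therefore play the role that the visited nodes of an $l$-walk played for $\mathcal{A}(G,l)$, except that the effective branching between two consecutive potential choices is $(d-1)^{r_G}$ rather than $d-1$. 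Consequently the natural length scale becomes $\log_{(d-1)^{r_G}}n=\Theta(\log_d n/r_G)$, which explains the definition $\gamma'_G=\sqrt{\log_d n/r_G}$ and the factor $r_G$ appearing in the case-I bound (and in the lower bound). Writing $D:=(d-1)^{r_G}$, the whole dense proof should go through with $D$ substituted for $d-1$ and $\log_d n/r_G$ for $\log_d n$.

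First I would re-establish the uniformity of Lemma~\ref{lem:uniform} for $\mathcal{A}'(G,l)$. For each node $u$ let $N'(u)$ denote the set of endpoints of $r_G$-step non-backtracking walks leaving $u$; since the girth is $\omega(lr_G)$ this set has size $\Theta(D)=\omega(\log n)$ and plays the role of $N(u)$. Rerunning the potential-function argument of the Key Lemma with $a_t(u)$ redefined as the number of nonempty nodes of $N'(u)$ and with $d$ replaced by $D$, the bound $a_t(u)\le \log n+D/4<D/2$ holds because $D\ge(\log n)^2\gg 4\log n$; thus with high probability a constant fraction of each $N'(u)$ stays empty throughout the first $n_1=\Theta(n)$ rounds. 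Since a potential choice is drawn essentially uniformly from such a set and ties are broken at random, the analogue of Lemma~\ref{lem:Upper1} then gives that $\mathcal{A}'(G,l)$ is $(\alpha,n_1)$-uniform, exactly as needed to feed into Lemma~\ref{lem:tree}.

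Next I would transport the witness-graph machinery of Section~\ref{sec:witness} to the potential-choice level. I would define the interference graph on the sets of potential choices and reprove Lemma~\ref{int} and Lemma~\ref{lem:tree} with $(d-1)$ replaced by $D$ throughout; the bound of Lemma~\ref{lem:tree} is insensitive to this substitution because the branching factor cancels in the ratio $\Delta(\mathcal{I})/|V(\mathcal{I})|\approx 2(l+1)^2$. The {\sf Partition-Branch} procedure now partitions the $l+1$ potential choices into $k$ segments and branches along $r_G$-length sub-walks, so the event $\mathcal{N}_\delta$ and Lemma~\ref{lem:Max} must be re-proved for sub-sequences of $\delta$ potential choices, which span graph-length $\Theta(\delta r_G)$. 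Setting $k=\max\{4,\lfloor l/\gamma'_G\rfloor\}$, $\delta=\lfloor\lfloor l/k\rfloor/4\rfloor$, $\rho=\lceil 8(\log_d n/r_G)/\delta^2\rceil$ and $h=\lceil\log\log n/\log(k-2)\rceil$, the construction of Lemma~\ref{lem:witness} produces a $c$-loaded $(\lambda,\mu)$-tree with the same $\lambda\le 2k\log n$ and $\mu\ge (l+1)\log n$, and the final union bound of the main theorem applies verbatim, yielding the two stated regimes (with $\gamma'_G$ in place of $\gamma_G$).

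The main obstacle is the girth accounting. Each branch of the witness tree is now a non-backtracking walk of length $lr_G$ rather than $l$, and in the worst regime $k=4$ the tree has depth $h=\Theta(\log\log n)$, so the diameter of the union of all branches is $\Theta(h\cdot lr_G)=\Theta(l(\log\log n)^2)$, which is exactly the quantity controlled by the hypothesis that $G$ has girth at least $10l(\log\log n)^2$. I would therefore redo Lemma~\ref{lem:obs} and Lemma~\ref{lem:lastlevel} with this inflated diameter, checking that two meta-walks sharing potential choices must in fact share a contiguous sub-sequence (because high girth forces a unique, hence commonly traversed, path between any two shared nodes), that graphical intersections remain paths, and that distinct branches stay disjoint. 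Verifying these structural facts at the $r_G$-scale, together with the $\mathcal{N}_\delta$ estimate for $r_G$-separated potential choices, is the only genuinely new work; everything else is a faithful substitution of $(d-1)^{r_G}$ for $d-1$ and of $\log_d n/r_G$ for $\log_d n$ in the dense proof.
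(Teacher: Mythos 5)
Your proposal is correct and follows essentially the same route as the paper: the paper's own (much briefer) argument in Appendix~\ref{sec:sparse} likewise defines the interference graph $\mathcal{I}'(G,l)$ on the sets of potential choices, observes that each potential choice is effectively drawn from a neighbourhood of size $(d-1)^{r_G}=\omega(\log n)$, and concludes that $\mathcal{A}'(G,l)$ reduces the sparse graph to the dense setting so that the machinery of Sections~\ref{sec:defi}--\ref{main thm} carries over with $(d-1)^{r_G}$ in place of $d-1$ and $\log_d n/r_G$ in place of $\log_d n$. Your write-up is in fact more explicit than the paper's sketch, in particular on the uniformity argument, the cancellation of the branching factor in $\Delta(\mathcal{I}')/|V(\mathcal{I}')|$, and the girth accounting $\Theta(h\cdot l\cdot r_G)=\Theta(l(\log\log n)^2)$.
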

	
	The analysis of  allocation algorithm $\mathcal{A}'(G, l)$ is almost the same as the algorithm for dense graphs so we only outline  parts of  the proof and notations that are slightly different.
	Let us start by defining an interference graph and show some of its properties (similar to Section \ref{sec:defi})
	\begin{defi}(Interference graph )
		For any given pair $(G, l)$, interference graph  $\mathcal{I}'(G,l)$ is a graph whose nodes are  the set of  potential choices in each $l\cdot r_G$-walk on  $G$ and two sets are connected if and only if they intersect each other.
	\end{defi}
	\begin{lem} We have that
		\begin{align*}
		&~|V(\mathcal{I}')|=nd(d-1)^{l\cdot r_G-1}/2,\\
		&~\Delta(\mathcal{I}')\le
		(l+1)^2d(d-1)^{l\cdot r_G-1}.
		\end{align*} 
	\end{lem}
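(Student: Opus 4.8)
The plan is to mirror the proof of Lemma~\ref{int} almost verbatim, replacing the walk length $l$ by $l\cdot r_G$, while being careful that the vertices of $\mathcal{I}'$ are indexed by $(l r_G)$-walks but that adjacency is decided only by the $l+1$ \emph{potential choices} each such walk carries. The structural fact I would establish first is that, since $G$ is $d$-regular with girth at least $10\,l(\log\log n)^2$ and $r_G=\lceil 2\log_{d-1}\log n\rceil=\Oh(\log\log n)$, we have $l\cdot r_G=\Oh\!\left(l\log\log n\right)$, which is far below the girth. Hence $G$ looks locally like a $d$-ary tree along any $(l r_G)$-walk, so that in such a walk every non-backtracking step strictly increases the distance from the start and all the exact counts below are valid.

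For part (i), I would count $(l r_G)$-walks exactly as in Lemma~\ref{int}(i): a non-backtracking walk of length $L$ has $n$ choices for its starting node, $d$ choices for the first step, and $d-1$ choices for each of the remaining $L-1$ steps, giving $n\,d\,(d-1)^{L-1}$ directed walks; dividing by $2$ to unorient yields $n\,d\,(d-1)^{L-1}/2$ undirected walks. Setting $L=l\cdot r_G$ gives the number of $(l r_G)$-walks. Because the tree-like local structure forces consecutive potential choices $u_{jr_G},u_{(j+1)r_G}$ to lie at distance exactly $r_G$ joined by a unique geodesic, the entire walk is recovered from its potential-choice set $\{u_0,u_{r_G},\dots,u_{l r_G}\}$; thus vertices of $\mathcal{I}'$ are in bijection with $(l r_G)$-walks and $|V(\mathcal{I}')|=n\,d\,(d-1)^{l r_G-1}/2$.

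For part (ii), the one place where the sparse analysis departs from the dense one is that adjacency in $\mathcal{I}'$ is triggered only by shared potential choices, of which each walk has exactly $l+1$, rather than by all $l r_G+1$ visited nodes. So I would fix a node $v\in V(G)$ and a position $p\in\{0,r_G,2r_G,\dots,l r_G\}$ and count the $(l r_G)$-walks that visit $v$ at step $p$: viewing such a walk as two edge-disjoint non-backtracking arms leaving $v$ (of lengths $p$ and $l r_G-p$, whose first edges differ), the same bookkeeping as in Lemma~\ref{int} gives exactly $d(d-1)^{l r_G-1}$ of them, independently of $p$. Summing over the $l+1$ admissible positions, $v$ is a potential choice of at most $(l+1)\,d(d-1)^{l r_G-1}$ walks. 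Finally, since a fixed walk carries $l+1$ potential choices and each is shared with at most $(l+1)d(d-1)^{l r_G-1}$ other walks, the maximum degree is at most $(l+1)\cdot(l+1)\,d(d-1)^{l r_G-1}=(l+1)^2 d(d-1)^{l r_G-1}$.

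The only genuine subtlety — and the point I would stress — is the bookkeeping of which positions count: because only every $r_G$-th visited node is a potential choice, the outer factor is $(l+1)^2$ and \emph{not} $(l r_G+1)^2$, even though the exponent of $(d-1)$ grows with $l r_G$. This is exactly what keeps $\Delta(\mathcal{I}')$ small enough for the subsequent union-bound arguments (the sparse analogue of Lemma~\ref{lem:tree}) to go through, so pinning down this factor is essentially the whole content of the lemma; the counting itself is routine given the girth bound.
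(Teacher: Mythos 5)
Your proposal is correct and follows essentially the same route as the paper's own proof: part (i) via the girth-induced bijection between $(l\cdot r_G)$-walks and their potential-choice sets, and part (ii) by counting walks having a fixed node as their $(ir_G)$-th visited node, summing over the $l+1$ admissible positions, and multiplying by the $l+1$ potential choices per walk. You merely spell out the details the paper leaves as ``easy to show,'' including the key observation that the outer factor is $(l+1)^2$ rather than $(l\cdot r_G+1)^2$.
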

	\begin{proof}
		Since $G$ has girth at least $l\cdot r_G$, each $l\cdot r_G$-walk determines a unique set of potential choices and hence $|V(\mathcal{I}')|$ is the number of all $lr_G$-walks, that is,
		\[|V(\mathcal{I}')|=nd(d-1)^{l\cdot r_G-1}/2.
		\]
		It is easy to show that, 
		 for every $0\le i\le l$, each node can be the $(ir_G)$-th visited  node of $d(d-1)^{lr_G-1}$ $lr_G$-walks. So every node is contained in at most $(l+1)d(d-1)^{lr_G-1}$ many walks. On the other hand, each set contains $l+1$ nodes and hence, 
		\[
		\Delta(\mathcal{I}')\le
		(l+1)^2d(d-1)^{l\cdot r_G-1}.
		\]
	\end{proof}
	
	Similar to Section \ref {sec:defi}, we can define the 
	$(\lambda, \mu)$-trees contained in  $\mathcal{I'}$ and get the similar results in terms of $V(\mathcal{I}')$ and $\Delta(\mathcal{I}')$.

	The shown results for dense graph including appearance probability of a $c$-loaded $(\lambda, \mu)$-tree (Section \ref{sec:defi}) and witness graphs (Section \ref{sec:witness})
	are based on the local properties of a regualr graph with degree at $\omega(\log n)$ and girth $10l h$. We also know that 	   
	allocation algorithm $\mathcal{A}'(G, l)$ samples nodes from a neighborhood of radius $l\cdot r_G$. It only collects the load information every $r_G$-th  visited node and ignore rest of the visited nodes. 
		Now if we sort potential choices according to their distance from the starting node, say $(u_0, u_1\cdots, u_l)$. It is easy to see that each $u_i$, $1\le i\leq l$, is chosen from a set of nodes of size at least $(d-1)^{r_G}=\omega(\log n)$, as the graph locally looks like a tree.

	Roughly speaking, $\mathcal{A}'(G, l)$ reduces sparse $d$-regular graph $G$  to a regular graph with degree $\omega(\log n)$ and hence the results also hold for $\mathcal{A}'$.

	\section{A Lower Bound}\label{App:lower}
	In this section  we derive a lower bound for the maximum load attained by allocation  algorithms  on  dense and sparse graphs.
	Define
	\[
	r_G:= \left\{
	\begin{array}{l l}
	1 & \quad \text{if $d=\omega(\log n)$ , }\\
	\lceil 2\log_{d-1}{\log n}\rceil & \quad \text{otherwise.}\\
	\end{array} \right.
	\]
	
	Consider the generic balanced allocation $\mathcal{A}(G, l)$ that proceeds af follows:
	each ball takes a NBRW of length $l\cdot r_G$, and then it places itself on a least-loaded node among every $r_G$-th visited node. When $r_G=1$, it means every visited node is considered  as a potential choice. It is easy to see that the algorithm covers both classes of graph. Now we show the following lower bound for the maximum load.

	\begin{thm}[Lower Bound]\label{thm:lower}
		Suppose that $G$ be a $d$-regular $n$-vertex  graph with  girth at least $\omega(l)$, where  $ l\in [32 r_G, \Oh(\gamma_G)]$ is an integer and 
		$\gamma_G=\sqrt{\log_{d}n/r_G}$. Then  with  probability $1-n^{-\Omega(1)}$  the maximum load  attained by $\mathcal{A}(G, l)$ is  at least \[\Omega\left(\frac{\log_{d}n}{r_G\cdot l^2}\right).
		\]
	\end{thm}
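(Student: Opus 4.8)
The plan is to reduce the allocation to a classical balls-into-bins experiment and then apply a pigeonhole argument to a single heavily chosen walk.

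First I would record the observation that makes the reduction exact. By the girth hypothesis a non-backtracking walk of length $lr_G$ traces a \emph{simple} path, so its potential choices form a fixed $(l+1)$-element set $\{u_0,u_{r_G},\dots,u_{lr_G}\}$ of nodes. As in Lemma~\ref{int}, there are exactly
\[
N=\tfrac12\,nd(d-1)^{lr_G-1}
\]
such potential-choice paths, and a direct computation (uniform starting vertex, then uniform non-backtracking steps, each undirected path arising from two directed walks) shows that each ball selects each path with probability exactly $1/N$, independently of the others. Hence, at the level of chosen walks, $\mathcal{A}(G,l)$ is equivalent to throwing $n$ balls uniformly and independently into $N$ bins. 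Since $lr_G=\Oh(\sqrt{r_G\log_d n})=o(\log_d n)$ in the stated regime, we have $N=n^{1+o(1)}$, so the experiment is only mildly under-loaded, which is exactly why a large maximum occupancy is possible.

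The heart of the argument is to show that \emph{some} walk is chosen by many balls. Set $k=\lfloor c\log_d n/(lr_G)\rfloor$ for a small constant $c\in(0,1)$ and let $B$ count the walks selected by at least $k$ balls. I would estimate
\[
\Ex{B}=N\cdot\Pr{\mathrm{Bin}(n,1/N)\ge k}\approx N\Big(\tfrac{\mathrm{e}n}{kN}\Big)^{k}.
\]
Writing $\Lambda=\ln(N/n)\approx lr_G\ln(d-1)$, the threshold relation $\ln N=k(\Lambda+\ln k-1)$ gives $k^{\ast}\approx\log_d n/(lr_G)$; the role of the regime $l\in[32r_G,\Oh(\gamma_G)]$ together with the definition of $r_G$ is precisely to guarantee $\Lambda=\omega(\log\log n)\gg\ln k$, which both legitimises this approximation and forces $k\Lambda=\Theta(\log n)$ with a constant strictly below $1$ once $c$ is small. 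Consequently $\ln\Ex{B}=(1-\Theta(c))\log n-o(\log n)$, so $\Ex{B}=n^{\Omega(1)}$. To promote this first-moment estimate to an existence statement I would invoke the negative association of the occupancy counts $Y_1,\dots,Y_N$ of a multinomial distribution: the indicators $\mathbf{1}\{Y_i\ge k\}$ are non-decreasing functions of disjoint coordinate blocks, hence pairwise non-positively correlated, so $\Var{B}\le\Ex{B}$. Chebyshev's inequality then yields $\Pr{B=0}\le 1/\Ex{B}=n^{-\Omega(1)}$.

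Finally I would close with the pigeonhole step. On the event $B\ge1$ there is a walk $\Pi$ chosen by at least $k$ balls; all of these balls have the \emph{same} $(l+1)$-element set of potential choices, namely the nodes of $\Pi$, and each is allocated on one of them. Thus these balls deposit at least $k$ balls onto $l+1$ nodes, so by pigeonhole one node carries load at least
\[
\frac{k}{l+1}=\Omega\!\left(\frac{\log_d n}{r_G\,l^2}\right),
\]
which is the claimed bound, holding with probability $1-n^{-\Omega(1)}$. The main obstacle is the middle paragraph: making the second-moment estimate of $\Ex{B}$ rigorous in the lightly-loaded regime and, in particular, verifying $\Lambda=\omega(\log\log n)$ across the whole range of $d$ (the dense case, where $\ln d=\omega(\log\log n)$, and the sparse case, where $r_G\ln(d-1)=\Theta(\log\log n)$ while $l=\omega(1)$), so that the $\ln k$ corrections are genuinely negligible and $\Ex{B}$ is polynomially large.
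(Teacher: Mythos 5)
Your proposal follows essentially the same route as the paper's proof: a first-moment bound on the number of $lr_G$-walks chosen at least $\Theta\left(\log_d n/(lr_G)\right)$ times, negative correlation of the occupancy indicators plus Chebyshev to upgrade this to existence with probability $1-n^{-\Omega(1)}$, and pigeonhole over the $l+1$ potential choices of the heavy walk. The obstacle you flag in your middle paragraph is handled in the paper without any condition like $\Lambda=\omega(\log\log n)$: it sets $\tau=\log_d n/(6lr_G)$ and uses $\log_d\tau<\log_d\log_d n\le r_G\le l$ (valid in both the dense and sparse regimes) to get $(lr_G+\log_d\tau)\tau\le \tfrac13\log_d n$, hence $\Pr{X_P=1}=\Omega(n^{-1/3})$ and an expected $\Omega(n^{2/3})$ heavy walks, which makes the $\ln k$ corrections you worry about disappear.
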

	
	\begin{proof}
		We know in each round the algorithm picks  a random path of length $l\cdot r_G$. Let us define  indicator random variable $X_{P}$ for every path of length $l\cdot r_G$  as follows,
		\[
		X_{{P}}:= \left\{
		\begin{array}{l l}
		1 & \quad \text{if $P$  is chosen at least  $\tau$ times by $\mathcal{A}$, }\\
		0 & \quad \text{otherwise,}\\
		\end{array} \right.
		\] 
		where $\tau$ will be specified later.
		The total number of paths of length $l\cdot r_G$, say $s$, is 
		\[
		nd(d-1)^{l\cdot r_G-1}/2\le nd^{l\cdot r_G}/2.
		\]
		
		Let $P$ be an arbitrary path of length $l\cdot r_G$ in $G$. Thus we get 
		\begin{align}\label{prob1}
		\Pr{X_{{P}}=1}&=\sum_{i=\tau}^n {n \choose i}\left(\frac{1}{s}\right)^i\left(1-\frac{1}{s}\right)^{n-i}
		\ge \left(\frac{n}{s\cdot \tau}\right)^\tau \left(1-\frac{1}{s}\right)^{n}\notag\\
		&\ge \left(\frac{2}{d^{lr_G}\cdot \tau}\right)^\tau\left(1-\frac{1}{s}\right)^s\ge  d^{-(lr_G+\log_d\tau)\tau}/\mathrm{e},
		\end{align}
		where the second inequality follows from 
		$n\le s\le n\cdot d^{lr_G}/2$.
		By setting 
		\[
		\tau=\frac{\log_d n}{6l\cdot r_G},
		\]
		and using the fact that
		$\log_d\tau<\log_d\log_d n\le r_G\leq l$ we get 
		\[
		(lr_G+\log_d\tau)\tau\le \log_d n/6 + {\log_d n}/6=\log_d n/3.
		\]
		By substituting  the above upper bound in (\ref{prob1}), we get
		\[
		\Pr {X_P=1}=\Omega(n^{-1/3}).
		\]
		Let us define  the random variable $Y=\sum_{{\text{ all paths}}} X_P$.
		By linearity of expectation we have   
		\begin{align}\label{infinitexp1}
		\Ex{Y}=s\cdot\Pr{X_P=1}=(n\cdot d\cdot (d-1)^{lr_G-1}/2) \Omega(n^{-1/3})=\Omega(n^{2/3}).
		\end{align}
		It is easily seen that  the random variables $X_P$ and $X_{P'}$  are negatively correlated,   which means for every  $P$ and $P'$,
		\[
		\Ex{X_P\cdot X_{P'}}\le \Ex{X_P}\cdot\Ex{X_{P'}}.
		\] This implies that 
		\begin{align*}
		\Var{Y}&=\sum_{{P}}(\Ex{X^2_{{P}}}-(\Ex{X_{{P}}})^2)+\underbrace{\sum_{{P}\neq {P}'} (\Ex{X_{{P}}X_{{P}'}}-\Ex{X_{P}}\Ex{X_{{P}'}})}_{\leq 0}\\
		&\leq \sum_{{P}} \Ex{X^2_{{P}}}=\Ex{Y}. 
		\end{align*}
		Applying Chebychev's inequality and above inequality  yield   that 
		\[
		\Pr{Y=0}\le \Pr{|Y-\Ex{Y}|\ge\Ex{Y}}=\frac{\Var{Y}}{(\Ex{Y})^2} \leq \frac{1}{\Ex{Y}}.
		\]
		
		By equality (\ref{infinitexp1}) we have that $\Ex{Y}=\Omega(n^{2/3})$. Therefore with probability at least $1-\Oh(n^{-2/3})$ we have $Y\ge 1$, which means  
		there exists a path $P$  that   is chosen at least $\tau$ times.
		Since every $P$ contains  $l+1$ choices,
		by the pigeonhole principle  there is a node  with load at least 
		\[
		\Omega\left(\frac{\tau}{l}\right)=\Omega\left(\frac{\log_{d}n}{r_Gl^2}\right).
		\]
		
	\end{proof}

\end{document}